\documentclass[twoside,leqno,onecolumn]{article}
\usepackage{amsmath,amsfonts,amssymb,xspace}
\usepackage{ltexpprt}
\usepackage{graphicx}
\usepackage{xcolor}
\usepackage{setspace,caption}
\usepackage{algorithm}
\usepackage{algorithmic}
\setlength{\oddsidemargin}{0.1 in}
\setlength{\evensidemargin}{0.1 in}
\setlength{\topmargin}{-0.6 in}
\setlength{\textwidth}{6.5 in}
\setlength{\textheight}{8.5 in}
\setlength{\headsep}{0.75 in}
\setlength{\parindent}{0 in}
\setlength{\parskip}{0.1 in}


\newcommand\mbR{\mbox{$\mathbb{R}$}}
\newcommand\mbC{\mbox{$\mathbb{C}$}}

\newcommand\ket[1]{| #1 \rangle}
\newcommand\bra[1]{\langle #1 |}
\newcommand\qip[2]{\langle #1 | #2 \rangle}

\newcommand\rank{\mbox{\tt {rank}}\xspace}
\newcommand\srank{\mbox{\tt {S-rank}}\xspace}

\newcommand\prank{\mbox{\tt {rank}$_{\tt psd}$}\xspace}

\newcommand\tr{\mbox{\tt {tr}}\xspace}

\newcommand\size{\mbox{\sf {size}}\xspace}
\newcommand\Q{\mbox{\sf {Q}}\xspace}
\newcommand\R{\mbox{\sf {R}}\xspace}

\newcommand\h{\mathcal{H}}

\mathchardef\mhyphen="2D

\newcounter{thm}
\newtheorem{example}[thm]{Example}

\newcommand{\comment}[1]{{}}

\allowdisplaybreaks[1]

\begin{document}

    \title{\vspace{-1cm} The Generations of Classical Correlations via Quantum Schemes}

    \author{Zhenyu Chen\thanks{Institute for Interdisciplinary Information Sciences, Tsinghua University, Beijing 100084, P. R. China. Email: chenzhen20@mails.tsinghua.edu.cn},~
    Lijinzhi Lin\thanks{Department of Computer Science and Technology, Tsinghua University, Beijing 100084, P. R. China. Email: lljz20@mails.tsinghua.edu.cn},~
    Xiaodie Lin\thanks{Institute for Interdisciplinary Information Sciences, Tsinghua University, Beijing 100084, P. R. China. Email: linxd19@mails.tsinghua.edu.cn},~
    Zhaohui Wei\thanks{Yau Mathematical Sciences Center, Tsinghua University, Beijing 100084, and Yanqi Lake Beijing Institute of Mathematical Sciences and Applications, 101407, P. R. China. Email: weizhaohui@gmail.com},~
    Penghui Yao\thanks{State Key Laboratory for Novel Software Technology, Nanjing University, Nanjing, Jiangsu Province 210023, P. R. China, and Hefei National Laboratory, Hefei 230088, P. R. China. Email: pyao@nju.edu.cn}}

    \maketitle

    \begin{abstract}
    Suppose two separated parties, Alice and Bob, share a bipartite quantum state or a classical correlation called a \emph{seed}, and they try to generate a target classical correlation by performing local quantum or classical operations on the seed, i.e., any communications are not allowed. We consider the following fundamental problem about this setting: whether Alice and Bob can use a given seed to generate a target classical correlation. We show that this problem has rich mathematical structures. Firstly, we prove that even if the seed is a pure bipartite state, the above decision problem is already NP-hard and a similar conclusion can also be drawn when the seed is also a classical correlation, implying that this problem is hard to solve generally. Furthermore, we prove that when the seed is a pure quantum state, solving the problem is equivalent to finding out whether the target classical correlation has some diagonal form of positive semi-definite factorizations that matches the seed pure state, revealing an interesting connection between the current problem and optimization theory. Based on this observation and other insights, we give several necessary conditions where the seed pure state has to satisfy to generate the target classical correlation, and it turns out that these conditions can also be generalized to the case that the seed is a mixed quantum state. Lastly, since diagonal forms of positive semi-definite factorizations play a crucial role in solving the problem, we develop an algorithm that can compute them for an arbitrary classical correlation, which has decent performance on the cases we test.
    \end{abstract}

    \section{Introduction}
    Local state transformation is a fundamental problem in quantum information theory, which has wide applications in communication complexity, resource theory, quantum distributed computing, quantum interactive proof systems, etc. In the setting of local state transformation, given two bipartite states  $\rho$ and  $\sigma$, two separated parties, Alice and Bob, share $\rho$, and their goal is to generate $\sigma$. In this paper, we concern ourselves about the case {where} the target state $\sigma$ is a {classical correlation} $(X,Y)$, which is a pair of random {variables distributed} according to a joint probability distribution $P$, for convenience, we call {half} of the total number of bits needed to record the labels of $(X,Y)$ the \emph{size} of $P$. Suppose $P$ is not a product probability distribution, then Alice and Bob have to share some {resource} beforehand, such as another classical correlation $P'$ called a \emph{seed} correlation or a shared quantum state $\rho$ called a \emph{seed} state. In the former case, all their operations are classical and local (each operation involves only one party), while in the latter case, they could perform local quantum operations, i.e., each party can make proper quantum measurements on the quantum subsystem and output the outcomes as $X$ or $Y$. Similar to the classical case, we define the size of $\rho$ to be half of the total number of qubits in $\rho$, denoted $\size(\rho)$. Here we stress that when Alice and Bob are generating a target classical correlation $P$ based on {their} shared resources, any communication between them is not allowed. In the current paper, we will focus on this kind of setting.

    In such settings, the minimum size of such seed correlation (state) that can generate $P$ has been defined as the \emph{classical (quantum) correlation complexity} of $P$, denoted $\R(P)$ ($\Q(P)$). In fact, it has been proved that \cite{zhang2012quantum,jain2013efficient}
    \begin{equation}
    \R(P) = \lceil \log_2 \rank_+(P) \rceil,
    \label{eq:qcorr+rank}
    \end{equation}
    and
    \begin{equation}
    \Q(P) = \lceil \log_2 \prank(P) \rceil.
    \label{eq:qcorrprank}
    \end{equation}
    For any nonnegative matrix $P\in \mbR_+^{n\times m}$, its nonnegative rank $\rank_+(P)$ is defined as the minimum number $r$ such that $P$ can be decomposed as the summation of $r$ nonnegative matrices of rank $1$, and $\prank(P)$ is its positive semi-definite rank (PSD-rank), which is the minimum $r$ such that there are $r \times r$ positive semi-definite matrices $C_x$, $D_y\in\mbC^{r\times r}$ satisfying that $P(x,y) = \tr (C_x  D_y)$ for all $x$ and $y$, and this is called a \emph{PSD decomposition} \cite{fiorini2012linear,fawzi2015positive}.

    In such a situation, a natural further question {arises: given a specific quantum state} $\rho_0$ and $\size(\rho_0)\geq\Q(P)$, can we generate $P$ by measuring $\rho_0$ locally? Similarly, if we have a classical correlation $P_0$ such that $\size(P_0)\geq\R(P)$, can we generate $P$ based on $P_0$ using local operations only? Note that the work in \cite{jain2013efficient} cannot answer these questions. In the current manuscript, we will show that these problems have very rich mathematical structures.

    \section*{Contributions}
    Our main contributions are as follows.
    \begin{itemize}
        \item We show that even for the special cases that the seed is a bipartite pure state $\ket{\psi}$ or another classical correlation $P'$, determining whether or not $P$ can be produced from $\ket{\psi}$ or $P'$ is NP-hard, implying that one-shot local state transformation is generally hard to solve.

        \item We prove that if the seed is a bipartite pure state $\ket{\psi}$, determining whether or not $P$ can be produced from $\ket{\psi}$ is equivalent to finding out whether $P$ as a nonnegative matrix has a certain {diagonal} form of {PSD factorizations} that matches $\ket{\psi}$, revealing an interesting connection between the current problem and optimization theory.

        \item We prove several necessary conditions that $\ket{\psi}$ has to satisfy to generate $P$. Furthermore, these conditions can also be generalized to the case that the seed is a mixed quantum state.

        \item We develop an algorithm that can compute {the {diagonal} form of PSD factorizations} for an arbitrary $P$, which has decent performance on the cases we test, and thus can be directly utilized to determine whether a given seed state $\ket{\psi}$ can produce $P$.
    \end{itemize}


    \section*{Related work}

    {\em Non-interactive simulations of joint distributions}, the classical counterpart of local state transformation, is a fundamental task in information theory. In this setting, two separated parties are provided with sequences $X^n$ and $Y^n$, respectively, where $\{(X_i,Y_i)\}_{1\leq i\leq n}$ are identically distributed draws from a probability distribution $P(x,y)$. The goal is to determine whether they can generate a pair of random variables $(U,V)_{u,v}$ possessing a joint distribution that closely approximates a target distribution $Q(u,v)$ without any communication between them. The history of research on non-interactive simulations of joint probability distributions dates back to the pioneering works by G\'acs and K\"orner~\cite{Gacs:1973} and Wyner~\cite{Wyner:1975:CIT:2263311.2268812}. Interested readers may refer to subsequent work such as \cite{7782969,7452414} and the references therein for more information. Local state transformations, first investigated by Beigi~\cite{Beigi:2013}, have gained increasing attention recently. However, characterizing the possibility of local state transformations has proved challenging, leading to various studies on necessary and sufficient conditions, as well as algorithm designs for this problem~{\cite{mojahedian2019correlation,qin2021nonlocal,chau2012entanglement,QYao:2022}}.

    Most research on non-interactive simulations of joint distributions and local state transformations focuses on the asymptotic setting, where the parties share infinite copies of the source states {\cite{Delgosha2014,Beigi:2013}}. {In this scenario,  two powerful tools are studied: quantum maximal correlation \cite{Beigi:2013} and hypercontractivity ribbon \cite{Delgosha2014}. These are utilized to demonstrate nontrivial applications, particularly in proving the impossibility of local state transformations. However, one limitation of migrating these two approaches to the one-shot case is that these two quantities remain the same for all entangled pure states. For example, the quantum maximal correlation equals $1$ for all entangled pure states, which implies that they contain the strongest correlation, thus failing in providing nontrivial results in the one-shot case.}


    {To} the best of our knowledge, research on one-shot non-interactive simulations of joint probability distributions and one-shot local state transformations is much less. Jain, Shi, Wei, and Zhang proved that the minimum sizes of seed classical states or quantum states required to generate a given joint distribution are tightly captured by nonnegative ranks and positive semidefinite ranks respectively. However, both of them are NP-hard to compute~\cite{vavasis2010complexity,shitov2017complexity}.

    \section*{Preliminaries}
    Let $[n]=\{1,2, \ldots, n\}$. Let $A=[A(i, j)]_{i,j}$ be an arbitrary $m$-by-$n$ matrix with the $(i, j)$-th entry being $A(i, j)$, and we write $A^T$ as the transpose of A. We define $\text{diag}(x_1, x_2, \ldots, x_n)$ as the $n$-by-$n$ diagonal matrix with the diagonal entries $x_1, x_2, \ldots, x_n$. If A is a {self-adjoint} matrix, let $A = \sum_a a\ket{a}\bra{a} $ be its spectral decomposition. Given a function $f$ from complex numbers to complex numbers, we define $f(A) = \sum_af(a) \ket{a}\bra{a}$. A matrix $A$ is said to be {positive semi-definite} if all its eigenvalues are nonnegative, and we write $A\geq 0$ to indicate matrix $A$ is a PSD matrix. If $A,B \geq 0$ such that $\mathrm{tr}(AB)=0$, then $AB=0$ \cite{fawzi2015positive}.

    If $P$ is an $m$-by-$n$ classical correlation, let $P(x)$ denote the marginal probability of getting $x$, i.e., $P(x) \equiv \sum_y P(x,y)$, and similarly, $P(y) \equiv \sum_x P(x,y)$. $P(x|y) \equiv \frac{P(x,y)}{{P(y)}}$ denotes the conditional probability of getting $x$ given $y$. Define the classical fidelity for two nonnegative vectors $p(x)$ and $q(x)$ as $F(p,q)=\sum_x\sqrt{p(x)}\cdot\sqrt{q(x)}$.

    A quantum state $\rho$ in Hilbert space $\mathcal{H}$ is a trace-one positive semi-definite operator acting on $\mathcal{H}$. A quantum state $\rho$ is called pure if it is rank-one as a matrix, namely $\rho=|\psi\rangle\langle\psi|$ for some unit vector $|\psi\rangle$. In this special case, we also write $\rho$ as $\ket{\psi}$.  For a pure state $\ket{\psi} \in \mathcal{H}_A \otimes \mathcal{H}_B$, its Schmidt decomposition is defined as $\ket{\psi}=\sum_i \sqrt{\lambda_i} \ket{\alpha_i} \otimes \ket{\beta_i}$ where $\ket{\alpha_i}$ and $\ket{\beta_i}$ are orthonormal bases for $\mathcal{H}_A$ and $\mathcal{H}_B$ respectively. {The nonnegative real numbers $\sqrt{\lambda_i}$ in the Schmidt decomposition} are called the Schmidt coeﬀicients of $\ket{\psi}$. For quantum states $\rho$ and $\sigma$, the fidelity between them is defined as $\mathrm{F}(\rho, \sigma) = \operatorname{tr}\left(\sqrt{\sigma^{1 / 2} \rho \sigma^{1 / 2}}\right)$.

    \section{The quantum case that $\rho$ is pure}

    We first consider the following special case: If the seed is a bipartite pure state $\ket{\psi}$, can a target classical correlation $P$ be produced from it by local operations only? We now show that this special case already has very rich mathematical structures.

    \subsection{On the computational complexity}

    We first prove that solving this special case generally is already NP-hard, implying that it is hard to solve the original problem efficiently.

    As mentioned in the introduction, for a given classical correlation $P$, its quantum correlation complexity, the minimum size of a quantum seed state $\ket{\psi_0}$ that can generate $P$, has been characterized completely.


    \begin{lemma}[\cite{jain2013efficient}]\label{lem:qrho}
    Let $P = [P(x,y)]_{x,y}$ be a classical correlation, and $\rho=\sum_{x,y}P(x,y)\cdot\ket{x}\bra{x} \otimes \ket{y}\bra{y}$ be a quantum state in $\h_A \otimes \h_B$. Then it holds that
    \begin{align*}
        \Q(\rho) &= \lceil \log_2 \prank(P) \rceil \\
        &= \min_{\h_{A_1}, \h_{B_1}}\{\big\lceil\log_2\srank(\ket{\psi})\big\rceil: \ket{\psi} \text{ is a pure state}\\
        &\text{ in } \h_{A_1}\otimes \h_A\otimes \h_B \otimes \h_{B_1},  \rho= \tr_{\h_{A_1} \otimes \h_{B_1}} \ket{\psi}\bra{\psi}\},
    \end{align*}
    where $\srank(\ket{\psi})$ is the Schmidt rank of $\ket{\psi}$ with respect to the partition $AA_1|BB_1$.
    \end{lemma}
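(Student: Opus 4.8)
The plan is to separate the two displayed equalities. For $\Q(\rho)=\lceil\log_2\prank(P)\rceil$ one invokes \eqref{eq:qcorrprank} after observing $\Q(\rho)=\Q(P)$: from the diagonal state $\rho$ the two parties recover $P$ by measuring in the computational basis, and conversely, given any local protocol producing $P$, each party can let its local channel additionally prepare a register holding a copy of its own outcome, which turns the joint output into exactly $\rho$. So the substance is the second equality, and since $\lceil\log_2(\cdot)\rceil$ is monotone it is enough to prove the integer identity $\prank(P)=\min_{\h_{A_1},\h_{B_1}}\min_{\ket{\psi}}\srank(\ket{\psi})$, the inner minimum being over pure $\ket{\psi}$ on $\h_{A_1}\otimes\h_A\otimes\h_B\otimes\h_{B_1}$ with $\rho=\tr_{A_1B_1}\ket{\psi}\bra{\psi}$, and $\srank$ taken across $AA_1|BB_1$.

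First I would show $\prank(P)\le\srank(\ket{\psi})$ for \emph{every} such purification. Fix a Schmidt decomposition $\ket{\psi}=\sum_{i=1}^{r}\sqrt{s_i}\,\ket{\phi_i}_{AA_1}\ket{\chi_i}_{BB_1}$ with $r=\srank(\ket{\psi})$, expand $\ket{\phi_i}=\sum_x\ket{x}_A\ket{\phi_i^x}_{A_1}$ and $\ket{\chi_i}=\sum_y\ket{y}_B\ket{\chi_i^y}_{B_1}$, and regroup $\ket{\psi}=\sum_{x,y}\ket{x}_A\ket{y}_B\ket{\eta_{xy}}$ with $\ket{\eta_{xy}}=\sum_i\sqrt{s_i}\ket{\phi_i^x}\ket{\chi_i^y}$. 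Since $\tr_{A_1B_1}\ket{\psi}\bra{\psi}=\rho$ is diagonal in the $\ket{x}\ket{y}$ basis, $\qip{\eta_{xy}}{\eta_{xy}}=P(x,y)$, i.e. $P(x,y)=\sum_{i,j}\sqrt{s_is_j}\,\qip{\phi_i^x}{\phi_j^x}\,\qip{\chi_i^y}{\chi_j^y}$. Now let $C_x$ be the Gram matrix of $\{s_i^{1/4}\ket{\phi_i^x}\}_i$ and $D_y$ the Gram matrix of $\{s_i^{1/4}\ket{\chi_i^y}\}_i$; these are PSD of size $r$, and expanding $\tr(C_xD_y^T)=\sum_{i,j}(C_x)_{ij}(D_y)_{ij}$ reproduces exactly the displayed sum, so $P(x,y)=\tr(C_xD_y^T)$ with $C_x,D_y^T$ PSD of size $r$, whence $\prank(P)\le r$.

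For the reverse inequality I would exhibit, from a PSD factorization $P(x,y)=\tr(C_xD_y)$ of size $r=\prank(P)$, a purification of $\rho$ with Schmidt rank at most $r$. Write $C_x$ and $\overline{D_y}$ (the entrywise conjugate, still PSD) as Gram matrices, $C_x=[\qip{a_i^x}{a_j^x}]_{ij}$ and $\overline{D_y}=[\qip{b_i^y}{b_j^y}]_{ij}$ with all $\ket{a_i^x},\ket{b_i^y}\in\mbC^r$; take $\h_{A_1}=\mbC^{|X|}\otimes\mbC^r$ and $\h_{B_1}=\mbC^{|Y|}\otimes\mbC^r$, set $\ket{\phi_i}:=\sum_x\ket{x}_A\otimes\ket{x}\ket{a_i^x}$ and $\ket{\chi_i}:=\sum_y\ket{y}_B\otimes\ket{y}\ket{b_i^y}$, and let $\ket{\psi}:=\sum_{i=1}^{r}\ket{\phi_i}\otimes\ket{\chi_i}$; being a sum of $r$ product vectors across $AA_1|BB_1$, it satisfies $\srank(\ket{\psi})\le r$. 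The redundant flag copies of $x$ inside $\h_{A_1}$ and of $y$ inside $\h_{B_1}$ annihilate every cross term with $(x,y)\ne(x',y')$, while the surviving inner products pair up — using $\overline{D_y}$ and the Hermiticity of $D_y$ — to $\tr(C_xD_y)=P(x,y)$; hence $\tr_{A_1B_1}\ket{\psi}\bra{\psi}=\sum_{x,y}P(x,y)\ket{x}\bra{x}\otimes\ket{y}\bra{y}=\rho$ and $\qip{\psi}{\psi}=\sum_{x,y}P(x,y)=1$, so $\ket{\psi}$ is a legitimate purification and $\min_{\ket{\psi}}\srank(\ket{\psi})\le r$.

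I expect the reverse inequality to be the main obstacle, precisely because a raw PSD factorization does \emph{not} yield a purification of $\rho$ (only of a state with the correct diagonal): the natural environment vectors $\ket{\eta_{xy}}$ read off from $\sqrt{C_x}\sqrt{D_y}$ are not mutually orthogonal for distinct $(x,y)$ — for instance $\tr(C_x\sqrt{D_y}\sqrt{D_{y'}})$ need not vanish. The fix is exactly the padding of $\h_{A_1},\h_{B_1}$ by perfectly correlated classical registers, which costs nothing in Schmidt rank, together with careful transpose/conjugation bookkeeping so that the Hilbert--Schmidt pairing outputs $P(x,y)=\tr(C_xD_y)$ rather than $\overline{P(x,y)}$ or $\tr(C_x\overline{D_y})$. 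Everything else — the Schmidt-decomposition algebra, positive semidefiniteness of Gram matrices, and the passage from the $\lceil\log_2\rceil$ statement to the integer one — is routine.
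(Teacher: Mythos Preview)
The paper does not actually prove this lemma---it is quoted from \cite{jain2013efficient}---but it \emph{does} prove the sharper Lemma~\ref{thm:canonical}, and your argument is essentially the same construction specialized from Schmidt coefficients to Schmidt rank. In both directions the ingredients coincide: extracting a size-$r$ PSD factorization from the Schmidt decomposition via Gram matrices of the local components, and, conversely, building a purification of Schmidt rank at most $r$ from a PSD factorization by taking columns of $\sqrt{C_x^T}$, $\sqrt{D_y}$ (your Gram vectors play the same role) together with the redundant classical flags $\ket{x}\ket{x}$, $\ket{y}\ket{y}$ to kill the off-diagonal cross terms. Your transpose/conjugate bookkeeping (using $\overline{D_y}$ on Bob's side) is a cosmetic variant of the paper's choice of $\sqrt{C_x^T}$ on Alice's side; both resolve the same issue and yield $\tr(C_xD_y)=P(x,y)$. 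So your proposal is correct and aligned with the paper's own method.
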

    That is to say, a given pure state $\ket{\psi_0}$ can generate $P$ only when the size of $\ket{\psi_0}$ is at least $\lceil \log_2 \prank(P) \rceil$. In fact, \cite{jain2013efficient} proves a further conclusion, that is, $\ket{\psi_0}$ can generate $P$ if and only if $\ket{\psi_0}$ has the same Schmidt coefficients with some purification of $\rho$ in $\h_{A_1}\otimes \h_A\otimes \h_B \otimes \h_{B_1}$, where the partition is chosen to be $AA_1|BB_1$.

    This can be explained as follows. On the one hand, if $\ket{\psi_0}$ has the same Schmidt coefficients with a purification $\ket{\phi}$ of $\rho$, then Alice and Bob can transform $\ket{\psi_0}$ to $\ket{\phi}$ by local unitaries (attach some blank ancilla qubits if needed). Then by throwing away the qubits in $\h_{A_1}$ and $\h_{B_1}$, the obtained quantum state will be exactly $\rho$. On the other hand, if $\ket{\psi_0}$ can generate $P$, it means that Alice and Bob can produce $\rho$ by performing local quantum operations on $\ket{\psi_0}$. Note that these local operations can be simulated by attaching necessary ancilla qubits, performing local unitary operations, and then dropping part of the qubits, which indicates that before dropping part of the qubits, the overall quantum state is actually a purification of $\rho$, and it has the same Schmidt coefficients with $\ket{\psi_0}$, as local unitaries do not change them.

    Unfortunately, {we now prove that} it is {generally} NP-hard to determine whether or not $\ket{\psi}$ has the same Schmidt coefficients will some purification of $\rho$ in $\h_{A_1}\otimes \h_A\otimes \h_B \otimes \h_{B_1}$.

	\begin{theorem}
		\label{thm:np-hard-quantum}
		The problem of deciding whether a given pure state $\ket{\psi}$ can generate a given correlation $P$ is $\mathbf{NP}$-hard.
	\end{theorem}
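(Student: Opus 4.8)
The plan is to work through the reformulation already visible in the excerpt. By the purification characterization recalled just above (from \cite{jain2013efficient}), $\ket{\psi}$ generates $P$ if and only if $\ket{\psi}$ has the same Schmidt coefficients as some purification of $\rho=\sum_{x,y}P(x,y)\,\ket{x}\bra{x}\otimes\ket{y}\bra{y}$ with respect to the partition $AA_1|BB_1$. Writing out the Schmidt decomposition of an arbitrary such purification in coordinates turns this into a statement about $P$ alone: with $r:=\srank(\ket{\psi})$ and $\lambda_1,\dots,\lambda_r$ the squared Schmidt coefficients of $\ket{\psi}$, the state $\ket{\psi}$ generates $P$ if and only if there exist $r\times r$ positive semi-definite matrices $C_x,D_y$ with $P(x,y)=\tr(C_xD_y)$ for all $x,y$, together with the normalizations $\sum_x C_x=\text{diag}(\lambda_1,\dots,\lambda_r)$ and $\sum_y D_y=I_r$. (Given such matrices one measures the locally equivalent state $\sum_k\sqrt{\lambda_k}\ket{k}\ket{k}$ with Alice's POVM $M_x=\text{diag}(\lambda)^{-1/2}C_x\,\text{diag}(\lambda)^{-1/2}$ and Bob's POVM $N_y=D_y^{T}$, producing $P$; the converse is the ancilla-plus-local-unitary argument above.) So the problem to be shown hard is: \emph{given a nonnegative matrix $P$, an integer $r$, and a probability vector $\lambda$, does $P$ admit a size-$r$ PSD factorization whose $C$-side sums to $\text{diag}(\lambda)$ and whose $D$-side sums to the identity?}

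I would prove hardness by reducing from the NP-hardness of deciding $\prank(M)\le k$ for an input nonnegative matrix $M$ and integer $k$ \cite{shitov2017complexity} (an alternative is to reduce from nonnegative-rank hardness \cite{vavasis2010complexity} through matrices whose minimum-size PSD factorizations are all diagonal). The obstacle that makes this non-immediate is precisely the pair of normalization constraints: an arbitrary size-$k$ PSD factorization of $M$ cannot in general be congruence-normalized so that its $C$-side becomes a prescribed diagonal and its $D$-side becomes the identity at the same time. I would get around this by enlarging $M$ to a correlation $P$ with $O(k)$ extra rows and columns forming a rigid \emph{scaffold}: the extra entries of $P$ are chosen, together with a suitable value of $r$, so that in every valid size-$r$ PSD factorization the scaffold $C_x$ are forced to be scalar multiples of the coordinate projectors $\ket{k}\bra{k}$, the scaffold $D_y$ are forced to complete $\sum_y D_y$ to $I$, and the sub-block of the $C_x,D_y$ carried by the original rows and columns of $M$ is forced to be a genuine size-$k$ PSD factorization of $M$. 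One then takes the seed $\ket{\psi}$ to have squared Schmidt coefficients equal to the diagonal that this scaffold produces. With the gadget in place, $\ket{\psi}$ generates $P$ iff $M$ has a PSD factorization of size $k$, i.e.\ iff $\prank(M)\le k$, and the reduction is evidently polynomial.

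The hard part will be the soundness direction of the scaffold: ruling out PSD factorizations of $P$ that satisfy the diagonal-form and Schmidt-matching constraints but do \emph{not} correspond to a genuine size-$k$ PSD factorization of $M$ --- for instance by hiding part of $M$ in scaffold coordinates or by exploiting the off-diagonal freedom in the $C_x,D_y$. I would control this by making the scaffold tight, planting enough zero entries of $P$ that the constraints $\tr(C_xD_y)=0$ force $C_xD_y=0$ (using the fact from the preliminaries that $A,B\geq 0$ and $\tr(AB)=0$ imply $AB=0$); propagating these orthogonality relations pins down the supports of all the $C_x$ and $D_y$ and confines the $M$-block to its intended $k$ coordinates, so that a factorization of $P$ really does project onto a size-$k$ PSD factorization of $M$. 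A secondary technical point, needed for the completeness direction, is that whenever $\prank(M)\le k$ one can indeed build a factorization of the enlarged $P$ in the exact normalized form: here the scaffold does double duty, since its extra rows and columns provide precisely the degrees of freedom a bare PSD factorization of $M$ lacks for simultaneous normalization on both sides.
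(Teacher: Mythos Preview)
Your reformulation via diagonal-form PSD factorizations is correct (and is essentially the content of the paper's later characterization), but the paper's reduction is far simpler than the one you propose: it reduces from $\mathsf{SUBSET\mhyphen SUM}$, not from PSD-rank. Given an instance $\{a_1,\dots,a_r\}$ with target $\tfrac{1}{2}\sum_i a_i$, the paper fixes the \emph{target} correlation to the trivial $P=\tfrac{1}{2}I_{2\times 2}$ and encodes the instance entirely in the \emph{seed}, taking $\ket{\psi}$ to have squared Schmidt coefficients $\lambda_i=a_i/\sum_k a_k$. The two zero entries of $P$ then force (via the $\tr(AB)=0\Rightarrow AB=0$ trick you already cite) Alice's and Bob's binary POVM elements to be equal projectors, block-diagonal in the Schmidt basis; the single remaining constraint $\tr(\Lambda^2 A)=\tfrac{1}{2}$ is literally a subset sum of the $\lambda_i$. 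No scaffold, no gadget, and the whole argument fits on a page.

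Your route has a genuine gap: the scaffold is described only by the properties it ought to have, not constructed, and the obstacles you flag are real. In particular, the reduction must output a specific probability vector $\lambda$ in polynomial time from $(M,k)$ alone, yet your $\lambda$ is ``the diagonal that this scaffold produces,'' which a priori depends on how the unknown size-$k$ factorization of $M$ interacts with the extra rows and columns; making $\lambda$ instance-computable while still allowing \emph{every} size-$k$ PSD factorization of $M$ to extend to a correctly normalized factorization of $P$ is exactly the hard design problem, and you have not solved it. The soundness direction is also more delicate than the zero-pattern propagation you sketch: unlike nonnegative factorizations, PSD factorizations can mix coordinates through off-diagonal entries that a handful of $\tr(C_xD_y)=0$ constraints need not kill. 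None of this says your approach cannot work, but as written it is a plan rather than a proof, and the $\mathsf{SUBSET\mhyphen SUM}$ reduction sidesteps all of it by keeping $P$ two-by-two.
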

	\begin{proof}
	We first recall the fact that the $\mathsf{SUBSET\mhyphen SUM}$ problem is NP-{hard} {\cite{hartmanis1982computers}}. Suppose we have a set of positive {integers} $S=\{a_1,a_2,...,a_{{r}}\}$, and $T=\sum_ia_i/2$. The $\mathsf{SUBSET\mhyphen SUM}$ problem is asking whether any subset of $S$ sums to precisely $T$. We now try to reduce the $\mathsf{SUBSET\mhyphen SUM}$ problem to the problem we are studying, {denoted as $\mathsf{CORRELATION\mhyphen GENERATION}$ problem for convenience}.
    {We specify the inputs of $\mathsf{CORRELATION\mhyphen GENERATION}$ to be the squared Schmidt coefficients of the pure seed state instead of its amplitudes in the computational basis;
    this does not sacrifice generality as any two pure states with the same Schmidt coefficients are equivalent to each other under local isometries.}

Consider the following classical correlation:
		\begin{align*}
			P=\frac{1}{2}I_{2\times 2}=\frac{1}{2}
			\begin{pmatrix}
				1 & 0 \\
				0 & 1 \\
			\end{pmatrix}.
		\end{align*}
		Let the Schmidt decomposition of $\ket{\psi}$ be
		\begin{align*}
			\ket{\psi}= & \sum_{i=1}^{r}\sqrt{\lambda_{i}}\ket{\alpha_i}\ket{\beta_i},
		\end{align*}
		{where $\lambda_i = \frac{a_i}{\sum_ka_k}$ for any $i$.} Also let $\Lambda=\mathrm{diag}(\sqrt{\lambda_1},\cdots,\sqrt{\lambda_r})$.
        {We remark that a finite representation of $\{\lambda_1,\cdots, \lambda_r\}$ can be obtained from $\{a_1,\cdots, a_r\}$ in polynomial steps due to each $\lambda_i$ being a rational number.}
		Without loss of generality, we assume that $\lambda_1\geq\lambda_2\geq...\geq\lambda_r>0$.
		Then $\ket{\psi}$ can generate correlation $P$ if and only if Alice and Bob can perform binary-outcome quantum measurements with
        operators $\{A,I-A\}$ and $\{B^T,I-B^T\}$ respectively such that it holds that
		\begin{align*}
			\mathrm{tr}(\ket{\psi}\bra{\psi}(A\otimes B^T))=\mathrm{tr}(\Lambda A\Lambda B)= & 1/2,\\
			\mathrm{tr}(\Lambda (I-A)\Lambda (I-B))= & 1/2, \\
			\mathrm{tr}(\Lambda A\Lambda (I-B))= & 0,\\
            \mathrm{tr}(\Lambda (I-A)\Lambda B)= & 0,
		\end{align*}
        where $A$ and $B$ are $r\times r$ PSD matrices satisfying $0\leq A,B\leq I$, and $B^T$ is the transpose of $B$.

        Note that $\mathrm{tr}(\Lambda A\Lambda (I-B))= 0$ implies that
		\begin{align*}
			\mathrm{tr}{(}(\Lambda^{1/2}A\Lambda^{1/2})(\Lambda^{1/2}(I-B)\Lambda^{1/2}){)}= & 0, \\
			\mathrm{tr}{(}(\Lambda^{1/2}(I-B)\Lambda^{1/2})(\Lambda^{1/2}A\Lambda^{1/2}){)}= & 0.
		\end{align*}
		Since both $\Lambda^{1/2}A\Lambda^{1/2}$ and $\Lambda^{1/2}(I-B)\Lambda^{1/2}$ are PSD matrices,
		we immediately have that
		\begin{align*}
			(\Lambda^{1/2}A\Lambda^{1/2})(\Lambda^{1/2}(I-B)\Lambda^{1/2})= & 0, \\
			(\Lambda^{1/2}(I-B)\Lambda^{1/2})(\Lambda^{1/2}A\Lambda^{1/2})= & 0.
		\end{align*}

        By similar argument, $\mathrm{tr}(\Lambda (I-A)\Lambda B)=0$ implies that
		\begin{align*}
			(\Lambda^{1/2}(I-A)\Lambda^{1/2})(\Lambda^{1/2}B\Lambda^{1/2})= & 0, \\
			(\Lambda^{1/2}B\Lambda^{1/2})(\Lambda^{1/2}(I-A)\Lambda^{1/2})= & 0.
		\end{align*}
		Since $\Lambda$ and $\Lambda^{1/2}$ are invertible, the above conditions indicate that
		\begin{align*}
			A\Lambda B= & A\Lambda = \Lambda B, \\
			B\Lambda A= & B\Lambda = \Lambda A.
		\end{align*}
		This gives $B=\Lambda^{-1} A\Lambda=\Lambda A \Lambda^{-1}$.
		Since $\Lambda$ is diagonal, we have $B=A$ and when $\lambda_i\neq \lambda_j$ it
        holds that $A_{ij}=0$, which means that $A$ and $B$ are block diagonal matrices and each block corresponds to each distinct value of $\lambda_i$.
		
        Furthermore, we also have
		\begin{align*}
			B=\Lambda^{-1} A\Lambda=\Lambda^{-1} A\Lambda B=\Lambda^{-1} A^2 \Lambda=B^2,
		\end{align*}
		hence both $A$ and $B$ are projectors. Combining this fact and that $A$ and $B$ are block diagonal matrices, we
        know that each block of $A$ and $B$ are also projectors.

		Finally, we obtain that
		\begin{align*}
			\frac{1}{2}&=\mathrm{tr}(\Lambda A\Lambda B)=\mathrm{tr}(\Lambda^2 A^2)=\mathrm{tr}(\Lambda^2 A)\\
            &=\sum_{\lambda}\lambda\cdot \mathrm{rank}(A|_{\mathrm{ker}(\Lambda^2-\lambda I)}),
		\end{align*}
		where $\lambda$ in the summation runs through all distinct values of $\lambda_i$, and $A|_{\mathrm{ker}(\Lambda^2-\lambda I)}$ is the block of $A$ that corresponds to the value of $\lambda$.
		It can be seen that this summation is exactly a {subset sum} of $\{\lambda_1,\cdots, \lambda_r\}$, {hence $\left(\sum_ia_i\right)\left(\sum_{\lambda}\lambda\cdot \mathrm{rank}(A|_{\mathrm{ker}(\Lambda^2-\lambda I)})\right)$ is a subset sum of $\{a_1,\cdots, a_r\}$.
		Therefore} if $\ket{\psi}$ can generate $P$, {a solution of the corresponding $\mathsf{SUBSET\mhyphen SUM}$ instance also exists}.  Conversely, if the $\mathsf{SUBSET\mhyphen SUM}$ instance has a solution, then by measuring $\ket{\psi}$ in its Schmidt basis and grouping the outcomes,
		we can transform $\ket{\psi}$ into $P$.
		Hence we obtain a {polynomial-time} reduction of the $\mathsf{SUBSET\mhyphen SUM}$ problem to the {$\mathsf{CORRELATION\mhyphen GENERATION}$ problem.}
	\end{proof}


    \subsection{The {diagonal} form of PSD {factorizations}}

    For a given classical correlation $P=[P(x,y)]_{x,y}$ and the corresponding quantum state $\rho=\sum_{x,y}P(x,y)\cdot\ket{x}\bra{x} \otimes \ket{y}\bra{y}$ in $\h_A \otimes \h_B$, we now know that generally it is hard to determine whether a seed quantum state $\ket{\psi_0}$ has the same Schmidt coefficients with some purification of $\rho$ in $\h_{A_1}\otimes \h_A\otimes \h_B \otimes \h_{B_1}$. Here the major challenge is to characterize the Schmidt coefficients for all possible purifications of $\rho$ in $\h_{A_1}\otimes \h_A\otimes \h_B \otimes \h_{B_1}$, and any results of this kind will be very useful to determine whether a given $\ket{\psi}$ can produce $P$ by local quantum operations only.

    Interestingly, it turns out that there exists a close relation between Schmidt coefficients of purifications of $\rho$ in $\h_{A_1}\otimes \h_A\otimes \h_B \otimes \h_{B_1}$ and a special form of PSD factorizations for $P$. Recall that the concept of PSD factorization {plays} an important role in optimization theory, which determines the computational power of semi-definite programming in combinatorial optimization problems {\cite{yannakakis1988expressing, blekherman2012semidefinite}}.

    Specifically, we call this special form of PSD factorizations \emph{the {diagonal} form of PSD factorizations}.
    \begin{Definition} A \emph{{diagonal} form of PSD {factorizations}} for a nonnegative matrix $P\in \mbR_+^{n\times m}$ is a collection of PSD matrices $C_x,D_y\in \mbC^{k\times k}$ that satisfy
    \[P(x,y)=\tr (C_xD_y),\ x=1,...,n,\ y=1...,m\]
    and
    \[\sum_{x=1}^nC_x=\sum_{y=1}^mD_y=\Lambda,\]
    where $\Lambda\in \mbC^{k\times k}$ is a diagonal nonnegative matrix. And if $k=\prank(P)$, we say this is an optimal {diagonal} form of PSD {factorizations}.
    \end{Definition}
    We remark that for an arbitrary nonnegative matrix $M$, one can always find an optimal {diagonal} form of PSD {factorizations}, which is implied by the following theorem. Note that to make the entries of $M$ sum to $1$, a proper renormalization for $M$ by a constant factor may be needed.

    \begin{lemma}\label{thm:canonical} Let $P=[P(x,y)]_{x,y}$ be a classical correlation, and $\rho=\sum_{x,y}P(x,y)\cdot\ket{x}\bra{x} \otimes \ket{y}\bra{y}\in\h_A \otimes \h_B$. Then there exists a purification $\ket{\psi}$ of $\rho$ with Schmidt coefficients
    $\sqrt{\lambda_1},\sqrt{\lambda_2},...\sqrt{\lambda_r}$ if and only if
    there exists a {diagonal} form of PSD factorization $\{C_x,D_y\}$ of $P$ such that $\sum_{x=1}^{n}C_x=\sum_{y=1}^{m}D_y = \mathrm{diag}(\sqrt{\lambda_1},\cdots,\sqrt{\lambda_r})$.
    \end{lemma}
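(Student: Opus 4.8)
The plan is to prove both implications by an explicit dictionary between the Schmidt data of a purification of $\rho$ and a diagonal form of PSD factorizations of $P$, using heavily that $\rho$ is diagonal in the product basis $\{\ket{x}_A\ket{y}_B\}$. Throughout I would write $\Pi^A_x=\ket{x}\bra{x}_A\otimes I_{A_1}$ and $\Pi^B_y=\ket{y}\bra{y}_B\otimes I_{B_1}$, so that $\sum_x\Pi^A_x=I$, $\sum_y\Pi^B_y=I$, and, because $\ket{\psi}$ purifies $\rho$, $\bra{\psi}(\Pi^A_x\otimes\Pi^B_y)\ket{\psi}=\bra{x}\bra{y}\rho\ket{x}\ket{y}=P(x,y)$. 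I would also reduce to the case that $\Lambda:=\mathrm{diag}(\sqrt{\lambda_1},\dots,\sqrt{\lambda_r})$ is invertible: a zero diagonal entry of $\sum_xC_x=\sum_yD_y$ forces the matching row and column of every $C_x,D_y$ to vanish and may be deleted. (A shorter route combines the fact recalled above from \cite{jain2013efficient} — that a pure state generates $P$ iff it shares Schmidt coefficients with some purification of $\rho$ — with the observation that $\sum_i\sqrt{\lambda_i}\ket{i}\ket{i}$ generates $P$ iff there are POVMs $\{M_x\},\{N_y\}$ on $\mbC^r$ with $\mathrm{tr}(\Lambda M_x\Lambda N_y^T)=P(x,y)$, and that $C_x=\Lambda^{1/2}M_x\Lambda^{1/2}$, $D_y=\Lambda^{1/2}N_y^T\Lambda^{1/2}$ then runs bijectively over diagonal PSD factorizations with diagonal $\Lambda$; I will instead give the direct construction below.)

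For the forward direction I would take a Schmidt decomposition $\ket{\psi}=\sum_{i=1}^r\sqrt{\lambda_i}\ket{u_i}_{AA_1}\ket{v_i}_{BB_1}$ of a purification of $\rho$ and set $(C_x)_{ij}=\lambda_i^{1/4}\lambda_j^{1/4}\bra{u_i}\Pi^A_x\ket{u_j}$ and $(D_y)_{kl}=\lambda_k^{1/4}\lambda_l^{1/4}\bra{v_l}\Pi^B_y\ket{v_k}$. The matrix $[\bra{u_i}\Pi^A_x\ket{u_j}]_{ij}$ is the compression of the projector $\Pi^A_x$ to $\mathrm{span}\{\ket{u_i}\}$, hence PSD; conjugation by $\mathrm{diag}(\lambda_i^{1/4})$ preserves positivity, so $C_x\ge0$, and likewise $D_y\ge0$ (a conjugate of a PSD matrix is PSD). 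From $\sum_x\Pi^A_x=I$ and $\langle u_i|u_j\rangle=\delta_{ij}$ one gets $\sum_xC_x=\Lambda$, and symmetrically $\sum_yD_y=\Lambda$. Finally, expanding $\mathrm{tr}(C_xD_y)=\sum_{ij}(C_x)_{ij}(D_y)_{ji}$ and comparing with $\bra{\psi}(\Pi^A_x\otimes\Pi^B_y)\ket{\psi}$ after relabeling $i\leftrightarrow j$ gives $\mathrm{tr}(C_xD_y)=P(x,y)$; the transposed index pattern in the definition of $D_y$ is exactly what makes this comparison close. Hence $\{C_x,D_y\}$ is a diagonal form of PSD factorizations with diagonal $\Lambda$.

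For the reverse direction, given $\{C_x,D_y\}$ of size $r$ with $\sum_xC_x=\sum_yD_y=\Lambda$ invertible, I would put $\tilde C_x=\Lambda^{-1/2}C_x\Lambda^{-1/2}\ge0$, $\tilde D_y=\Lambda^{-1/2}D_y\Lambda^{-1/2}\ge0$ (so $\sum_x\tilde C_x=\sum_y\tilde D_y=I$), factor $\tilde C_x=W_x^\dagger W_x$ and $\overline{\tilde D_y}=Z_y^\dagger Z_y$ with $W_x,Z_y\in\mbC^{r\times r}$, take $\h_{A_1}=\bigoplus_x\mbC^r$ and $\h_{B_1}=\bigoplus_y\mbC^r$, and form $\ket{u_i}=\sum_x\ket{x}_A\otimes(W_x\ket{i})$ with $W_x\ket{i}$ sitting in the $x$-th summand of $\h_{A_1}$, and $\ket{v_i}=\sum_y\ket{y}_B\otimes(Z_y\ket{i})$ with $Z_y\ket{i}$ in the $y$-th summand of $\h_{B_1}$. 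Orthogonality of the summands collapses $\langle u_i|u_j\rangle$ to $\sum_x(\tilde C_x)_{ij}=\delta_{ij}$ and $\langle v_i|v_j\rangle$ to $\sum_y(\overline{\tilde D_y})_{ij}=\delta_{ij}$, so $\ket{\psi}:=\sum_i\sqrt{\lambda_i}\ket{u_i}_{AA_1}\ket{v_i}_{BB_1}$ is already in Schmidt form with coefficients $\sqrt{\lambda_1},\dots,\sqrt{\lambda_r}$. The same orthogonality annihilates every cross term when tracing out $\h_{A_1}\otimes\h_{B_1}$, leaving $\mathrm{tr}_{A_1B_1}\ket{\psi}\bra{\psi}=\sum_{x,y}\big(\sum_{ij}\sqrt{\lambda_i\lambda_j}(\tilde C_x)_{ji}(\tilde D_y)_{ij}\big)\ket{xy}\bra{xy}$, and a short calculation shows $\sum_{ij}\sqrt{\lambda_i\lambda_j}(\tilde C_x)_{ji}(\tilde D_y)_{ij}=\mathrm{tr}(\tilde D_y\Lambda\tilde C_x\Lambda)=\mathrm{tr}(C_xD_y)=P(x,y)$. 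Thus $\ket{\psi}$ is a purification of $\rho$ with the prescribed Schmidt coefficients.

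The step I expect to be the main obstacle is the reverse construction, specifically realizing that $\h_{A_1}$ and $\h_{B_1}$ must be direct sums indexed by $x$ and $y$ and not a single shared copy of $\mbC^r$: with a shared ancilla the cross terms $\langle u_j^{x'}|u_i^x\rangle$ for $x\ne x'$ need not vanish, so $\mathrm{tr}_{A_1B_1}\ket{\psi}\bra{\psi}$ picks up off-diagonal coherences and is no longer the classical state $\rho$; the direct-sum structure is exactly what forces the reduced state to be diagonal. A minor recurring nuisance in both directions is keeping track of the $\Lambda^{\pm1/2}$ rescalings and the conjugation/transpose on Bob's side, both forced by writing the pure state in the symmetric Schmidt form $\sum_i\sqrt{\lambda_i}\ket{i}\ket{i}$ (equivalently by the identity $\bra{\phi}(A\otimes B)\ket{\phi}=\mathrm{tr}(\Lambda A\Lambda B^T)$).
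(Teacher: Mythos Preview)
Your proof is correct and follows essentially the same approach as the paper: the forward direction is the same Gram-matrix construction (the paper's $C_x(j,i)=\langle v_x^j|v_x^i\rangle$ is exactly your $(C_x)_{ji}=\lambda_j^{1/4}\lambda_i^{1/4}\langle u_j|\Pi^A_x|u_i\rangle$ after unwinding the paper's unnormalized Schmidt vectors), and the reverse direction is the same ``orthogonalize across $x$'s via an $x$-labeled ancilla'' idea, which the paper realizes by tensoring in an extra classical register $\ket{x}$ and you realize as the direct sum $\h_{A_1}=\bigoplus_x\mbC^r$. The only cosmetic differences are your explicit $\Lambda^{-1/2}$ normalization and the conjugate on Bob's side, both of which are equivalent bookkeeping choices.
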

    \begin{proof} Suppose the Schmidt decomposition {of $\ket{\psi}\in\h_{A}\otimes \h_{A_1}\otimes \h_B \otimes \h_{B_1} $}  can be written as
    \[
    \ket{\psi} =  \sum_{i=1}^r \left(\sum_{x=1}^n \ket{x} \otimes \ket{v_x^i}\right) \otimes \left(\sum_{y=1}^m \ket{y} \otimes \ket{w_y^i}\right),
    \]
    where $\ket{v_x^i}\in\h_{A_1}$ and $\ket{w_y^i}\in\h_{B_1}$ are unnormalized quantum states. According to the definition of Schmidt decomposition, one can choose the norms of $\ket{v_x^i}\in\h_{A_1}$ and $\ket{w_y^i}\in\h_{B_1}$ properly such that
    \begin{equation}\label{eq:canonical1}
    \sum_{x=1}^n\qip{v_{x}^j}{v_x^i}=\sum_{y=1}^m\qip{w_{y}^j}{w_y^i}=0, \ \ \text{for any }i\neq j,
    \end{equation}
    and
    \begin{equation}\label{eq:canonical2}
    \sum_{x=1}^n\qip{v_{x}^i}{v_x^i}=\sum_{y=1}^m\qip{w_{y}^i}{w_y^i}=\sqrt{\lambda_i}, \ \ \text{for any }i\in[r].
    \end{equation}

    Since $\ket{\psi}$ is a purification of $\rho$, it holds that
    \begin{align*}
     \rho & = \tr_{\h_{A_1} \otimes \h_{B_1}} \ket{\psi} \bra{\psi}  \\
    & =  \sum_{x, y} \ket{x}\bra{x} \otimes \ket{y}\bra{y} \left( \sum_{i,j=1}^r \qip{v_{x}^j}{v_x^i} \cdot \qip{w_{y}^j}{w_y^i}\right).
    \end{align*}
    We now define $r \times r$ matrices $C_x$ such that $C_x(j,i) = \qip{v_x^j}{v_x^i}$ for all $i,j \in [r]$, and $r \times r$ matrices $D_y$ such that $D_y(i,j) = \qip{w_y^j}{w_y^i}$ for all $i,j \in [r]$. Then for any $x\in[n]$ and $y\in[m]$, $C_x$ and $D_y$ are positive semi-definite matrices, and Eqs.\eqref{eq:canonical1} and \eqref{eq:canonical2} imply that
    \[
    \sum_{x=1}^nC_x=\sum_{y=1}^mD_y=\Lambda, \ \ \text{for any }i\in[r],
    \]
    where $\Lambda=\text{diag}(\sqrt{\lambda_1},...,\sqrt{\lambda_r})$.

    For the other direction, let $C_x, D_y \in \mathbb{C}^{r \times r}$ be positive semidefinite matrices with $\operatorname{tr}\left(C_x D_y\right)=P(x, y)$ {for all} $x\in[n],\ y \in [m]$, and this decomposition {satisfy} $\sum_{x=1}^{n}C_x=\sum_{y=1}^{m}D_y = \mathrm{diag}(\sqrt{\lambda_1},\cdots,\sqrt{\lambda_r})$. For $i \in[r]$, let $\left|v_x^i\right\rangle$ be the $i$-th column of $\sqrt{C_x^T}$ and let $\left|w_y^i\right\rangle$ be the $i$-th column of $\sqrt{D_y}$. Define $|\psi\rangle$ in $\mathcal{H}_A \otimes \mathcal{H}_A \otimes \mathcal{H}_{A_1} \otimes \mathcal{H}_B \otimes \mathcal{H}_B \otimes \mathcal{H}_{B_1}$ as follows:
    $$|\psi\rangle \!\stackrel{\text { def }}{=}\! \sum_{i=1}^r\!\left(\!\sum_x|x\rangle \!\otimes\!|x\rangle \!\otimes\!\left|v_x^i\right\rangle\!\right) \!\otimes\!\left(\!\sum_y|y\rangle \!\otimes\!|y\rangle \!\otimes\!\left|w_y^i\right\rangle\!\right)\!.$$
    {Then, by defining $\ket{V_i} = \sum_x|x\rangle \otimes|x\rangle \otimes\left|v_x^i\right\rangle$ and $\ket{W_i} = \sum_y|y\rangle \otimes|y\rangle \otimes\left|w_y^i\right\rangle$, it can be shown directly that}
    \begin{equation*}
        \bra{V_{i'}}V_{i}\rangle = \sum_x C_x^T(i',i) = 0, \ \ \text{for any }i\neq i' \in [r],
        \end{equation*}
    and
    \begin{equation*}
        \bra{V_{i}}V_{i}\rangle = \sum_x C_x^T(i,i) = \sqrt{\lambda_i}, \ \ \text{for any }i\in[r].
    \end{equation*}
    Similarly, it holds that $\bra{W_{i'}}W_{i}\rangle = \sum_y D_y(i',i) = 0$ and $\bra{W_{i}}W_{i}\rangle = \sum_y D_y(i,i) = \sqrt{\lambda_i}$ {for all $i\neq i' \in [r]$.} The above four equations imply $\ket{\psi}$ is a pure state with Schmidt coefficients
    $\sqrt{\lambda_1},\sqrt{\lambda_2},...\sqrt{\lambda_r}$ with respect to the partition $AAA_1|BBB_1$, also
    \begin{align*}
        & \operatorname{tr}_{\mathcal{H}_A \otimes \mathcal{H}_{A_1} \otimes \mathcal{H}_B \otimes \mathcal{H}_{B_1}}{|\psi\rangle\langle\psi|} \\
        = & \sum_{x, y}|x\rangle\langle x|\otimes| y\rangle\langle y|\left(\sum_{i, j=1}^r\left\langle v_x^j \mid v_x^i\right\rangle \cdot\left\langle w_y^j \mid w_y^i\right\rangle\right) \\
        = & \sum_{x, y}|x\rangle\langle x|\otimes| y\rangle\langle y| \cdot \operatorname{tr}\left(C_x D_y\right)=\rho,
        \end{align*}
    which shows $\ket{\psi}$ is a purification of $\rho$.

    \end{proof}

    {Since {Lemma} \ref{thm:canonical} gives a complete {description for the Schmidt coefficients} of all possible purifications of $\rho=\sum_{x,y}P(x,y)\cdot\ket{x}\bra{x} \otimes \ket{y}\bra{y}$, we directly obtain the following Theorem that determines whether a pure state $\ket{\psi}$ can be transformed to $P$ under local transformation.}

    \begin{theorem}\label{thm:transform condition}
    {A pure state $\ket{\psi}$ with Schmidt coefficients $\sqrt{\lambda_1},\sqrt{\lambda_2},...\sqrt{\lambda_r}$ can generate a classical correlation} $P=[P(x,y)]_{x,y}$ if and only if there exists a diagonal form of PSD factorization $\{C_x,D_y\}$ of $P$ such that $\sum_{x=1}^{n}C_x=\sum_{y=1}^{m}D_y = \mathrm{diag}(\sqrt{\lambda_1},\cdots,\sqrt{\lambda_r})$.
    \end{theorem}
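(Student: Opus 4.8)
The plan is to obtain Theorem~\ref{thm:transform condition} by composing two facts that are already in hand: the reformulation of local transformability in terms of Schmidt coefficients of purifications (the discussion following Lemma~\ref{lem:qrho}, due to \cite{jain2013efficient}), together with Lemma~\ref{thm:canonical}, which translates the admissible Schmidt spectra of purifications of $\rho$ into the existence of a diagonal form of PSD factorizations. So this theorem is essentially a corollary, and the proof is a chaining of equivalences rather than a new construction.

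First I would recall, from \cite{jain2013efficient}, that $\ket{\psi}$ can generate $P$ by local operations with no communication if and only if $\ket{\psi}$ has the same Schmidt coefficients --- as a multiset, up to appending zeros --- as some purification of $\rho=\sum_{x,y}P(x,y)\ket{x}\bra{x}\otimes\ket{y}\bra{y}$ living in $\h_{A_1}\otimes\h_A\otimes\h_B\otimes\h_{B_1}$ with respect to the bipartition $AA_1|BB_1$, for suitable ancilla spaces $\h_{A_1},\h_{B_1}$. For the forward direction one dilates any local quantum operation into attaching blank ancilla qubits, applying a local unitary, and tracing out part of the system; before the trace-out the global state is a purification of $\rho$, and local unitaries do not change Schmidt coefficients, so its Schmidt spectrum equals that of $\ket{\psi}$. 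For the backward direction one uses that two bipartite pure states with the same nonzero Schmidt coefficients are related by local unitaries after attaching blank ancillas, and then tracing out $\h_{A_1},\h_{B_1}$ leaves exactly $\rho$, from which the outcomes of local measurements in the computational bases of $\h_A,\h_B$ reproduce $P$.

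Next I would invoke Lemma~\ref{thm:canonical}: a purification of $\rho$ in $\h_{A_1}\otimes\h_A\otimes\h_B\otimes\h_{B_1}$ with Schmidt coefficients $\sqrt{\lambda_1},\sqrt{\lambda_2},\cdots,\sqrt{\lambda_r}$ exists if and only if $P$ admits a diagonal form of PSD factorization $\{C_x,D_y\}$ with $\sum_{x=1}^{n}C_x=\sum_{y=1}^{m}D_y=\mathrm{diag}(\sqrt{\lambda_1},\cdots,\sqrt{\lambda_r})$. Composing this with the equivalence of the previous paragraph gives exactly the statement of the theorem.

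The only point that needs care is reconciling the two notions of ``same Schmidt coefficients'': the characterization from \cite{jain2013efficient} ignores trailing zeros, while Lemma~\ref{thm:canonical} pins down the exact diagonal matrix $\mathrm{diag}(\sqrt{\lambda_1},\cdots,\sqrt{\lambda_r})$. I would resolve this by observing that appending a zero to the Schmidt spectrum corresponds to replacing every $C_x$ and $D_y$ by its direct sum with a $1\times1$ zero block (equivalently, adding an all-zero row and column), which preserves positive semidefiniteness and all of $\tr(C_xD_y)=P(x,y)$ and $\sum_xC_x=\sum_yD_y=\Lambda$; conversely, if some diagonal entry of $\Lambda$ vanishes then the corresponding rows and columns of every $C_x$ and $D_y$ must vanish as well (since $0\le C_x,D_y\le\Lambda$ entrywise on the diagonal forces those blocks to be zero), so that coordinate may be deleted without affecting anything. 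Hence the equivalence is insensitive to zero-padding, and the theorem follows. I expect this padding bookkeeping to be the only mildly delicate step; the rest is a direct substitution of the two lemmas.
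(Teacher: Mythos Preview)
Your proposal is correct and follows exactly the paper's approach: the paper likewise presents Theorem~\ref{thm:transform condition} as an immediate corollary obtained by chaining the characterization from \cite{jain2013efficient} (recalled after Lemma~\ref{lem:qrho}) with Lemma~\ref{thm:canonical}, without a separate proof. Your extra paragraph on zero-padding is sound bookkeeping that the paper leaves implicit.
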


    An interesting fact is that a nonnegative matrix can have multiple optimal {diagonal} {forms of PSD factorizations}. For example, let
    \[P = \frac{1}{3}\begin{bmatrix}
    1 & 1\\
    1 & 0
    \end{bmatrix},\]
    then it holds that $\prank(P)=2$. We define two pure states in $\h_{A}\otimes \h_B\otimes \h_{A_1} \otimes \h_{B_1}$ as
    \[\ket{\psi_1}={\sum_{(x,y)\in \{0,1\}^2}}\sqrt{P(x,y)}\ket{x}\ket{y}\ket{x}\ket{y}\]
    and
    \[{\ket{\psi_2}}={\sum_{(x,y)\in \{0,1\}^2}}\sqrt{P(x,y)}\ket{x}\ket{y}{\rm CNOT}\ket{x}\ket{y},\]
    where the CNOT gate is performed on the subsystems $A_1$ and $B_1$. Then it can be verified that $\ket{\psi_1}$ and $\ket{\psi_2}$ are two different purifications of $\rho=\sum_{x,y}P(x,y)\cdot\ket{x}\bra{x} \otimes \ket{y}\bra{y}$ in $\h_{A_1}\otimes \h_A\otimes \h_B \otimes \h_{B_1}$. By calculating their Schmidt coefficients, we can obtain two different optimal {diagonal} forms of PSD {factorizations} for $P$, where the $\Lambda$ matrices are $\Lambda_1\approx\text{diag}(0.9342,0.3568)$ and $\Lambda_2\approx\text{diag}(0.8165, 0.5774)$ respectively.


    \subsection{Several necessary conditions for that $\ket{\psi}$ generates $P$}

    Let $m(\cdot)$ be a measure of bipartite states which is monotone non-increasing under local quantum operations. Then it is easy to see that $m(\ket{\psi}\bra{\psi})\geq m(P)$ is a necessary condition for which $\ket{\psi}$ can be locally transformed to the target classical correlation $P$.
    Several information-theoretic quantities are known to be monotone non-increasing under local operations such as mutual information. Thus, if the mutual information $I(A:B)$ of $\ket{\psi}$ is less than that of $P$, then $P$ cannot be generated from $\ket{\psi}$ under local operations only.

    Another such measure of correlation is the \emph{sandwiched ${\alpha\mhyphen}$R\'{e}nyi divergence} \cite{muller2013quantum}, which is defined as
    \[\tilde{D}_{\alpha}(\rho\Vert\sigma)=\frac{1}{\alpha-1}\log\left(\frac{1}{\tr(\rho)}\tr\left[(\sigma^{\frac{1-\alpha}{2\alpha}}\rho\sigma^{\frac{1-\alpha}{2\alpha}})^{\alpha}\right]\right),\]
    where $\alpha\in(0,1)\cup(1,\infty]$. Let $\ket{\psi}=\sum_i\sqrt{\lambda_i}\ket{i_A}\ket{i_B},\rho=\ket{\psi}\bra{\psi}$ and $\sigma=\sum_{x,y}P(x,y)\cdot\ket{x}\bra{x} \otimes \ket{y}\bra{y}$, where $\ket{i_A},\ket{x}\in\mathcal{H}_A$ ($\ket{i_B},\ket{y}\in\mathcal{H}_B$) are orthonormal bases for systems $A$ ($B$). Since the data processing inequality holds for the order-$\alpha$ R\'{e}nyi divergence with $\alpha\in[\frac{1}{2},1)\cup(1,\infty]$, i.e., it cannot be increased by local quantum operations, by requiring $\tilde{D}_{\alpha}(\rho\Vert\rho_A\otimes\rho_B)\ge\tilde{D}_{\alpha}(\sigma\Vert\sigma_A\otimes\sigma_B)$ where $\rho_A,\rho_B,\sigma_A,\sigma_B$ are reduced density matrices for $\rho$ and $\sigma$ respectively, it derives
    \begin{equation}\label{eq:renyi}
        \left\{\begin{array}{cc}
        \left(\sum_i\lambda_i^{\frac{2}{\alpha}-1}\right)^{\alpha}\le \sum_{xy}\frac{P(x,y)^{\alpha}}{(P(x)P(y))^{\alpha-1}}, & \alpha\in[\frac{1}{2},1)\\
        \left(\sum_i\lambda_i^{\frac{2}{\alpha}-1}\right)^{\alpha}\ge\sum_{xy}\frac{P(x,y)^{\alpha}}{(P(x)P(y))^{\alpha-1}}, & \alpha\in(1,\infty)\\
        {\sum_i \frac{1}{\lambda_i} \geq \max\limits_{x,y} \frac{P(x,y)}{\sum_{j}P(x,{j})\sum_{i}P({i},y)},} & {\alpha=\infty.}\\
        \end{array}
        \right.
    \end{equation}
    Therefore, if $\ket{\psi}$ and $P$ violate any inequality in Eq.(\ref{eq:renyi}), $\ket{\psi}$ cannot generate $P$ under local operations.

    In addition to the above natural conclusions, we now show that the insight provided by Theorem \ref{thm:canonical} allows us to obtain several new necessary conditions that $\ket{\psi}$ has to satisfy to generate $P=[P(x,y)]_{x,y}$, which can be stronger than the above two necessary conditions. Again, we suppose the Schmidt rank of $\ket{\psi}$ is $r$, and the squared Schmidt coefficients are $\lambda_1\geq\lambda_2\geq...\geq\lambda_r$. For simplicity, in this subsection we suppose that $r=\prank(P)$, i.e., $\lambda_r>0$.

    \begin{proposition}(\text{A necessary condition for $\lambda_r$})\label{condition:first}
        If $\ket{\psi}$ can produce $P=[P(x,y)]_{x,y}$ by local operations only, then it holds that  \begin{equation}\label{eq:first_condition}
        \lambda_r\leq\min_{x,y}\frac{\sum_{{j}}P(x,{{j}})\sum_{{i}}P({{i}},y)}{P(x,y)}.
        \end{equation}
        \end{proposition}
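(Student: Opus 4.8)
The plan is to reduce everything to the factorization picture of Theorem~\ref{thm:transform condition}. Since $\ket{\psi}$ with squared Schmidt coefficients $\lambda_1\geq\cdots\geq\lambda_r>0$ generates $P$, that theorem supplies a diagonal form of PSD factorization $\{C_x,D_y\}$ of $P$ with $P(x,y)=\tr(C_xD_y)$, all $C_x,D_y\geq 0$, and $\sum_{x=1}^nC_x=\sum_{y=1}^mD_y=\Lambda:=\mathrm{diag}(\sqrt{\lambda_1},\dots,\sqrt{\lambda_r})$. The first step is to read off the two marginals from this data: summing $P(x,y)=\tr(C_xD_y)$ over $y$ and using $\sum_yD_y=\Lambda$ gives $\sum_{j}P(x,j)=\tr(C_x\Lambda)$, and symmetrically $\sum_{i}P(i,y)=\tr(D_y\Lambda)$.

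Next I would use that $\Lambda$ is diagonal with all diagonal entries at least $\sqrt{\lambda_r}$ (this is exactly where the ordering convention on the Schmidt coefficients enters), so $\Lambda\geq\sqrt{\lambda_r}\,I$ as operators. Since $C_x\geq 0$ and $D_y\geq 0$, this yields $\tr(C_x\Lambda)\geq\sqrt{\lambda_r}\,\tr(C_x)$ and $\tr(D_y\Lambda)\geq\sqrt{\lambda_r}\,\tr(D_y)$, hence
\[
\Big(\sum_{j}P(x,j)\Big)\Big(\sum_{i}P(i,y)\Big)\;\geq\;\lambda_r\,\tr(C_x)\,\tr(D_y).
\]
The remaining ingredient is the elementary fact that $\tr(AB)\leq\tr(A)\tr(B)$ for PSD matrices $A,B$, which follows from $\tr(AB)=\tr(A^{1/2}BA^{1/2})\leq\|B\|_{\mathrm{op}}\,\tr(A)\leq\tr(B)\,\tr(A)$. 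Applying it to $C_x,D_y$ gives $\tr(C_x)\tr(D_y)\geq\tr(C_xD_y)=P(x,y)$, so the displayed bound becomes $\big(\sum_{j}P(x,j)\big)\big(\sum_{i}P(i,y)\big)\geq\lambda_r\,P(x,y)$. Rearranging over all pairs $(x,y)$ with $P(x,y)>0$ (the inequality is vacuous otherwise) and taking the minimum yields the claimed bound on $\lambda_r$.

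I do not expect a serious obstacle: the argument is essentially a two-line operator-inequality chain once Theorem~\ref{thm:transform condition} is in hand. The only points needing a little care are the passage from the matrix identity $\sum_yD_y=\Lambda$ to the scalar marginal formula $\sum_jP(x,j)=\tr(C_x\Lambda)$, the invocation of $\Lambda\geq\sqrt{\lambda_r}I$, and the standard PSD trace inequality; one should also note explicitly that entries $(x,y)$ with $P(x,y)=0$ are harmlessly excluded from the minimum.
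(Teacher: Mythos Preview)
Your proof is correct and follows essentially the same route as the paper: invoke the diagonal PSD factorization, compute the marginals as $\tr(C_x\Lambda)$ and $\tr(D_y\Lambda)$, use $\Lambda\geq\sqrt{\lambda_r}I$, and bound $\tr(C_xD_y)\leq\tr(C_x)\tr(D_y)$. The only cosmetic difference is that the paper justifies this last trace inequality via Cauchy--Schwarz ($\tr(C_xD_y)\leq\sqrt{\tr(C_x^2)\tr(D_y^2)}\leq\tr(C_x)\tr(D_y)$) rather than your operator-norm argument.
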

    \begin{proof}
    As mentioned before, $\ket{\psi}$ can generate $P$ {if and only if} $\ket{\psi}$ has the same Schmidt coefficients with a purification of $\rho$. According to Theorem \ref{thm:canonical}, this means that $P$ has a {diagonal} form of PSD {factorizations} $C_{{i}}$ and $D_{{j}}$ with $\sum_{{i}}C_{{i}}=\sum_{{j}}D_{{j}}=\Lambda=\text{diag}(\sqrt{\lambda_1},...,\sqrt{\lambda_r})$, {where $i\in[n]$ and $j\in[m]$. For arbitrary $x\in[n]$ and $y\in[m]$}, we have that
    $$
    \sum_{{j}}P(x,{j})=\sum_{{j}}\tr (C_xD_{{j}})=\tr(C_x\cdot\Lambda).
    $$
    Since $\Lambda\geq \sqrt{\lambda_r}\cdot I_{r\times r}$, it holds that $\sum_{{j}}P(x,{j})\geq\sqrt{\lambda_r}\tr(C_x)$, where $I_{r\times r}$ is the $r\times r$ identity matrix. Similarly, we also have $\sum_{{i}}P({i},y)\geq\sqrt{\lambda_r}\tr(D_y)$.

    Meanwhile, according to the Cauchy-Schwarz inequality, the relation $P(x,y)=\tr(C_xD_y)$ implies that
    \[
    P(x,y)\leq\sqrt{\tr(C_x^2)}\sqrt{\tr(D_y^2)}\leq\tr(C_x)\tr(D_y),
    \]
    where we have utilized the fact that for any PSD matrix $M$ it holds that $(\tr(M))^2\geq\tr(M^2)$. Combining the above {facts}, we have that
    $$
    \sum_{{j}}P(x,{j})\sum_{{i}}P({i},y)\geq\lambda_r\tr(C_x)\tr(D_y)\geq\lambda_r\cdot P(x,y).
    $$
    Therefore, we conclude the proof.
    \end{proof}

    Interestingly, Proposition \ref{condition:first} can be compared with the result given by the \emph{sandwiched $\alpha\mhyphen$R\'{e}nyi divergence}. Recall that when $\alpha=\infty$, Eq.(\ref{eq:renyi}) yields
    $$\sum_i \frac{1}{\lambda_i} \geq \max\limits_{x,y} \frac{P(x,y)}{\sum_{j}P(x,{j})\sum_{i}P({i},y)},$$
    which is always weaker than Eq.(\ref{eq:first_condition}).
    \begin{example}
    Let us see an example showing that this condition can be stronger than the ones given by the mutual information and the sandwiched ${\alpha\mhyphen}$R\'{e}nyi divergence. Suppose one tries to generate $P=\begin{bmatrix}
    0.3 & 0\\
    0 & 0.7
    \end{bmatrix}$ from $\ket{\psi}=\frac{\ket{00}+\ket{11}}{\sqrt{2}}$ under local operations. Firstly, it can be verified that $P$ and $\ket{\psi}$ satisfy the conditions given by the mutual information condition and the sandwiched ${\alpha\mhyphen}$R\'{e}nyi divergence. However, note that the minimal squared Schmidt coefficient $\lambda_r$ of $\ket{\psi}$ is 0.5, while the {right-hand side} of Eq.(\ref{eq:first_condition}) for $P$ is 0.3, implying that Eq.(\ref{eq:first_condition}) is violated and the possibility of producing $P$ by performing local operations on $\ket{\psi}$ is ruled out.
    \end{example}


    \begin{proposition}(A necessary condition for the entropy \\
        of $\{\lambda_i\}$)\label{condition:second}
        If $\ket{\psi}$ can produce $P=[P(x,y)]_{x,y}$ by local operations only, then it holds that  \begin{equation}\label{eq:Holevo_condition}
        I(P)\leq -\sum_i\lambda_i\log(\lambda_i).
        \end{equation}
        \end{proposition}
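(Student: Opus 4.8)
The plan is to read a diagonal form of PSD factorization of $P$ as a Holevo-type ensemble and then invoke the Holevo bound. This is the natural route to the stated inequality: plain monotonicity of mutual information under local operations only gives $I(P)\le I(A:B)_{\ket{\psi}}=2\big(-\sum_i\lambda_i\log\lambda_i\big)$ for a bipartite pure state (both reduced states of $\ket{\psi}$ have entropy $-\sum_i\lambda_i\log\lambda_i$), which is weaker by a factor of two; the improvement comes precisely from the fact that in a Holevo-type argument only one of these two entropies appears, since Alice's classical output is absorbed into the ensemble index rather than traced out as a quantum system.

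Concretely, I would begin from Theorem~\ref{thm:transform condition}: since $\ket{\psi}$ generates $P$, there is a diagonal form of PSD factorization $\{C_x,D_y\}$ of $P$ with $\sum_{x=1}^{n}C_x=\sum_{y=1}^{m}D_y=\Lambda$, where $\Lambda=\mathrm{diag}(\sqrt{\lambda_1},\dots,\sqrt{\lambda_r})$; since we assume $r=\prank(P)$ we have $\lambda_r>0$, so $\Lambda$ is invertible. Discarding the trivial rows and columns where $P(x)=0$ or $P(y)=0$, I then set
\[
\hat D_y:=\Lambda^{-1/2}D_y\Lambda^{-1/2},\qquad \omega_x:=\frac{1}{P(x)}\,\Lambda^{1/2}C_x\Lambda^{1/2}.
\]
The routine verifications to carry out are: $\{\hat D_y\}$ is a POVM, since each $\hat D_y\ge 0$ and $\sum_y\hat D_y=\Lambda^{-1/2}\big(\sum_y D_y\big)\Lambda^{-1/2}=I$; each $\omega_x$ is a density operator, since $\omega_x\ge 0$ and $\tr(\omega_x)=\tfrac{1}{P(x)}\tr(C_x\Lambda)=\tfrac{1}{P(x)}\sum_y\tr(C_xD_y)=1$; and $\tr(\omega_x\hat D_y)=\tfrac{1}{P(x)}\tr(C_xD_y)=P(x,y)/P(x)$. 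Hence $(X,Y)$ with joint distribution $P$ is produced by the classical--quantum procedure: sample $X=x$ with probability $P(x)$, prepare $\omega_x$, and measure it with $\{\hat D_y\}$ to obtain $Y$.

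To conclude, I would observe that the average state of this ensemble is $\bar\omega:=\sum_x P(x)\,\omega_x=\Lambda^{1/2}\big(\sum_x C_x\big)\Lambda^{1/2}=\Lambda^{2}=\mathrm{diag}(\lambda_1,\dots,\lambda_r)$, so (using $\sum_i\lambda_i=1$) $S(\bar\omega)=-\sum_i\lambda_i\log\lambda_i$, where $S(\cdot)$ denotes the von Neumann entropy. The Holevo bound, applied to the ensemble $\{P(x),\omega_x\}$ together with the measurement $\{\hat D_y\}$, then gives
\[
I(P)=I(X:Y)\ \le\ S(\bar\omega)-\sum_x P(x)\,S(\omega_x)\ \le\ S(\bar\omega)\ =\ -\sum_{i=1}^{r}\lambda_i\log\lambda_i,
\]
where the last inequality uses $S(\omega_x)\ge 0$. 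This is exactly \eqref{eq:Holevo_condition}.

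I do not expect a genuine obstacle: once the right ensemble is identified, the Holevo bound does all the work. The only points needing care are the two normalization identities $\tr(\omega_x)=P(x)$ (before rescaling) and $\sum_x P(x)\,\omega_x=\Lambda^{2}$, and both rely on $\sum_x C_x=\sum_y D_y=\Lambda$ being a single common \emph{diagonal} matrix matched to the Schmidt spectrum of $\ket{\psi}$ — this is precisely where the \emph{diagonal} form of PSD factorization (as opposed to an arbitrary PSD factorization) enters, hence where Theorem~\ref{thm:transform condition} is essential. Equivalently, one may phrase the same estimate operationally as $I(X:Y)\le I(X:B)\le S(\rho_B)=-\sum_i\lambda_i\log\lambda_i$, where $B$ is Bob's system after Alice's measurement and $\rho_B$ is the reduced state of $\ket{\psi}$.
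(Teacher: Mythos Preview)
Your proposal is correct and is essentially the same argument as the paper's: the paper conjugates by $V=\mathrm{diag}(\lambda_1^{1/4},\dots,\lambda_r^{1/4})=\Lambda^{1/2}$ to obtain $C_x'=\Lambda^{1/2}C_x\Lambda^{1/2}$ and $D_y'=\Lambda^{-1/2}D_y\Lambda^{-1/2}$, which are exactly your $P(x)\,\omega_x$ and $\hat D_y$, and then applies the Holevo bound to the ensemble $\{P(x),\rho_x\}$ with POVM $\{D_y'\}$, dropping the nonnegative term $\sum_x P(x)S(\rho_x)$ just as you do. The only cosmetic difference is that you normalize the states upfront and add the operational rephrasing $I(X{:}Y)\le I(X{:}B)\le S(\rho_B)$ at the end.
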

        Before we prove this proposition, it is worth mentioning that the necessary condition provided by the monotonicity of mutual information directly reads
        \[
        I(P)\leq -2\sum_i\lambda_i\log(\lambda_i),
        \]
        indicating that the above new condition is always stronger.
    \begin{proof}
    Define $V=\text{diag}(\sqrt[4]{\lambda_1},\sqrt[4]{\lambda_2},...,\sqrt[4]{\lambda_r})$. Since $C_x$ and $D_y$ are a {diagonal} form of PSD {factorizations} for $P$, it holds that $C_x'=VC_xV$ and $D_y'=V^{-1}D_yV^{-1}$ are still a valid PSD decomposition for $P$, i.e., $P(x,y)=\tr(C_x'D_y')$. Furthermore, we now have that $\sum_xC_x'=\text{diag}(\lambda_1,\lambda_2,...,\lambda_r)$ and $\sum_yD_y'= I_{r\times r}$.

    At the same time, let $P_x$ denote the transpose of the $x$-th row of $P$, where $x\in[n]$. Define $\text{sum}(P_x)$ to be the {sum} of all the entries of $P_x$, then it holds that $\text{sum}(P_x)=\sum_y\tr(C_x'D_y')=\tr(C_x')$. Since $C_x'$ is PSD for any $x$ and $\sum_yD_y'=I_{r\times r}$, we now view $\{D_y'\}$ as a  set of a positive-operator-valued measure (POVM), and $\rho_{x}=C_x'/\tr(C_x')$ as a quantum state, then $P_x/\tr(C_x')$ is actually the outcome probability distribution if we measure $\rho_{x}$ with the POVM $\{D_y'\}$.

    We now consider the following virtual protocol performed by Alice and Bob. With probability $\tr(C_x')$, Alice prepares and sends the quantum state $\rho_{x}$ to Bob. After receiving the quantum state, he measures it using the POVM $\{D_y'\}$ and records the outcome $y$. This process can be regarded as a protocol that classical information is transformed from Alice to Bob. According to the Holevo bound {(\cite{10.5555/1972505}, chapter 12)}, it holds that
    \begin{align*}
    I(P)&\leq S\left(\sum_x\tr(C_x')\cdot\rho_{x}\right)-\sum_x\tr(C_x')\cdot S(\rho_{x})\\
    &\leq S\left(\sum_xC_x'\right)=-\sum_i\lambda_i\log(\lambda_i),
    \end{align*}
    where $I(P)$ is the mutual information between $x$ and $y$, $S(\rho)$ is the von Neumann entropy of $\rho$, and we have utilized the fact that $S(\rho_{y})\geq0$ for any $y$.
    \end{proof}
    Furthermore, the new condition can also be stronger than the necessary condition given by Eq.(\ref{eq:first_condition}) in some cases.
    \begin{example}
    Consider generating the classical correlation  $P=\frac{1}{9}\begin{bmatrix}
        1 & 4\\
        4 & 0
        \end{bmatrix}$ under local operation from the 2-qubit seed state with Schmidt coefficients $1/3$ and $2\sqrt{2}/3$, it is easy to verify that Eq.(\ref{eq:first_condition}) is satisfied.  However, by checking the Eq.(\ref{eq:Holevo_condition}), we have $I \approx 0.59 \leq -\sum_i\lambda_i\log(\lambda_i) \approx 0.5033$, excluding the possibility that generates $P$ from this seed state.
    \end{example}

    We now demonstrate how the diagonal form of PSD factorizations can facilitate the connection of our problem with a new correlation measure $V_\alpha(A ; B)$ for bipartite quantum states, which is defined as \cite{mojahedian2019correlation}
    \begin{equation}\label{eq:corr_measure}
        \begin{aligned}
            V_\alpha&(A ; B)\\
            &=\Bigl\|\left(I_B \otimes \rho_A^{-(\alpha-1) / 2 \alpha}\right) \rho_{B A}\left(I_B \otimes \rho_A^{-(\alpha-1) /(2 \alpha)}\right)\\
            & \qquad \qquad \qquad \qquad \qquad \qquad \qquad  -\rho_B \otimes \rho_A^{1 / \alpha} \Bigl\|_{(1, \alpha)},
            \end{aligned}
    \end{equation}
    where $\alpha\in[1,\infty)$, $\rho_A$ and $\rho_B$ are the reduced density matrices for Alice and Bob respectively, and $\|\cdot\|_{(1, \alpha)}$ is the $(1, \alpha)$ norm derived from the interpolation theory \cite{pisier1998non}. For any {$0 \leq p \leq q$}  and $M_{A B} \in \h_A\otimes \h_B, \left\|M_{A B}\right\|_{(p, q)}$ is defined as
    $$
    \begin{aligned}
    \left\|M_{A B}\right\|_{(p, q)}
    =\inf _{\sigma_A, \tau_A}\left\|\left(\sigma_A^{-\frac{1}{2 r}} \otimes I_B\right) M_{A B}\left(\tau_A^{-\frac{1}{2 r}} \otimes I_B\right)\right\|_q,
    \end{aligned}
    $$
    where $r \in(0,+\infty]$ is chosen such that $\frac{1}{p}=\frac{1}{q}+\frac{1}{r}$, $\|\cdot\|_p$ is the Schatten $p$-norm, and the infimum is taken over all density matrices $\sigma_A, \tau_A \in \h_A$.
    {In \cite{mojahedian2019correlation}, the properties of the correlation measure $V_\alpha(A ; B)$ have been well studied, where it} has been known that $V_\alpha(A ; B)$ is monotonically non-increasing under local operation, indicating that it can be utilized to quantify the correlation between subsystems A and B. Note that $V_\alpha(A ; B)$ is hard to compute analytically, but for classical correlation $P$, it can be represented as {\cite{mojahedian2019correlation}}
    \begin{equation}\label{eq:corr_classical}
        V'_\alpha(A ; B)=\sum_y\left(\sum_x P(x)|P(y|x)-P(y)|^\alpha\right)^{1 / \alpha}.
    \end{equation}
    Moreover, an equivalent expression for $V_2(A ; B)$ when $\alpha = 2$ has been given~\cite{mojahedian2019correlation}:
    \begin{equation}\label{eq:V_2}
        \begin{aligned}
            V&_2(A ; B)\\
            &=\!\inf _{\tau_B, \sigma_B}\!\bigg(\operatorname{tr}\!\left[\left(\!\rho_A^{-1 / 2} \!\otimes\! \tau_B^{-1 / 2}\!\right) \rho_{A B}\!\left(\!\rho_A^{-1 / 2} \!\otimes \!\sigma_B^{-1 / 2}\!\right)\! \rho_{A B}\right]\\
            &\qquad \qquad \qquad \qquad \qquad  -\operatorname{tr}\left[\tau_B^{-1 / 2} \rho_B \sigma_B^{-1 / 2} \rho_B\right]\bigg)^{1 / 2}, \\
            \end{aligned}
    \end{equation}
    where the infimum is taken over all density matrices $\tau_B, \sigma_B \in \h_B$. With the concepts given above, we prove the following necessary condition
    \begin{proposition}(The first necessary condition for the sum of squares of $\{\lambda_i\}$)\label{condition:third}
        If $\ket{\psi}$ can produce $P=[P(x,y)]_{x,y}$ by local operations only, then it holds that  \begin{equation}\label{eq:V2_condition}
            \sum_i \lambda_i^2 \leq 1 - V'_2(A ; B)^2/r.
        \end{equation}
        \end{proposition}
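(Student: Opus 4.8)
The plan is to use Theorem~\ref{thm:transform condition} to replace the hypothesis by the existence of a diagonal form of PSD factorizations of $P$, and then to extract the inequality from the monotonicity of the correlation measure $V_2(A;B)$ under local operations together with its closed form~\eqref{eq:V_2}. By Theorem~\ref{thm:transform condition} there is a diagonal form of PSD factorizations $\{C_x,D_y\}$ of $P$ with $\sum_x C_x=\sum_y D_y=\Lambda:=\mathrm{diag}(\sqrt{\lambda_1},\dots,\sqrt{\lambda_r})$, and $\Lambda$ is invertible since $\lambda_r>0$. I would rescale it to the POVM-normalized form $\hat C_x:=\Lambda^{1/2}C_x\Lambda^{1/2}$, $\hat D_y:=\Lambda^{-1/2}D_y\Lambda^{-1/2}$; a direct check gives $\hat C_x,\hat D_y\ge 0$, $\sum_y\hat D_y=I_r$ (so $\{\hat D_y\}$ is a POVM on $\mbC^r$), $\sum_x\hat C_x=\Lambda^2$, $\tr(\hat C_x\hat D_y)=\tr(C_xD_y)=P(x,y)$, and $\tr(\hat C_x)=\tr(C_x\Lambda)=P(x)$; discarding any $x$ with $P(x)=0$, we may assume all $P(x)>0$.

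Next, consider the classical--quantum state $\rho_{XB}:=\sum_x\ket{x}\bra{x}\otimes\hat C_x$ on $\h_X\otimes\h_B$ with $\h_B\cong\mbC^r$, whose marginals are $\rho_X=\sum_xP(x)\ket{x}\bra{x}$ and $\rho_B=\Lambda^2$. Measuring the $B$-system with the POVM $\{\hat D_y\}$ --- a local operation on $B$ --- maps $\rho_{XB}$ to $\sum_{x,y}P(x,y)\ket{x}\bra{x}\otimes\ket{y}\bra{y}$, i.e.\ exactly the correlation $P$. By monotonicity of $V_2$ under local operations, the value of $V_2$ on $\rho_{XB}$ (with $X$ in the first slot) is at least the value of $V_2$ on $P$, which in turn equals $V'_2(A;B)$ by the classical expression~\eqref{eq:corr_classical} (the registers line up: $X$ carries Alice's outcome, $Y$ carries Bob's). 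So it suffices to prove $V_2(X;B)^2\le r\,(1-\sum_i\lambda_i^2)$ for $\rho_{XB}$.

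To bound $V_2(X;B)$ for $\rho_{XB}$, I would evaluate the closed form~\eqref{eq:V_2} at the single feasible point $\tau_B=\sigma_B=I_r/r$. Since $\tau_B^{-1/2}=\sqrt r\,I_r$, $\rho_B=\Lambda^2$, and both $\rho_{XB}$ and $\rho_X$ are block diagonal in $X$, the first term of~\eqref{eq:V_2} becomes $r\sum_x\tr(\hat C_x^{\,2})/P(x)$ and the subtracted term becomes $r\,\tr(\Lambda^4)=r\sum_i\lambda_i^2$, so
\[
V_2(X;B)^2\;\le\; r\sum_x\frac{\tr(\hat C_x^{\,2})}{P(x)}\;-\;r\sum_i\lambda_i^2 .
\]
Now $\hat C_x\ge 0$ forces $\tr(\hat C_x^{\,2})\le(\tr\hat C_x)^2=P(x)^2$, hence $\sum_x\tr(\hat C_x^{\,2})/P(x)\le\sum_xP(x)=1$, and therefore $V_2(X;B)^2\le r(1-\sum_i\lambda_i^2)$. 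Combined with the previous paragraph this is exactly~\eqref{eq:V2_condition}.

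The delicate point is choosing the right relaxation: applying monotonicity of $V_2$ directly along $\ket{\psi}\to P$ is far too weak --- for a rank-$r$ maximally entangled seed it only yields $V'_2(A;B)^2\le r^2-1$ rather than the claimed $r-1$. What rescues the estimate, and makes it tight (e.g.\ for the maximally entangled seed producing the perfectly correlated uniform distribution), are the two moves: rescaling the factorization so that $\{\hat D_y\}$ is a genuine POVM, so that $P$ is literally one local measurement away from $\rho_{XB}$; and choosing the completely depolarizing $\tau_B=\sigma_B=I_r/r$ in~\eqref{eq:V_2}, which decouples the sum over $x$ and collapses everything to the elementary bound $\tr(\hat C_x^{\,2})\le(\tr\hat C_x)^2$. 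The only inputs beyond Theorem~\ref{thm:transform condition} are the monotonicity of $V_2$ (from~\cite{mojahedian2019correlation}) under a quantum channel acting on $B$ and the closed forms~\eqref{eq:corr_classical} and~\eqref{eq:V_2}; one should verify their orientation conventions are consistent with the direction of monotonicity being used.
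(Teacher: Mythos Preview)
Your proof is correct and follows essentially the same route as the paper: your rescaling $\hat C_x=\Lambda^{1/2}C_x\Lambda^{1/2}$, $\hat D_y=\Lambda^{-1/2}D_y\Lambda^{-1/2}$ coincides with the paper's $C_x'=VC_xV$, $D_y'=V^{-1}D_yV^{-1}$ (since $V=\Lambda^{1/2}$), your classical--quantum state $\rho_{XB}=\sum_x\ket{x}\bra{x}\otimes\hat C_x$ is exactly the paper's $\rho_{AB}=\sum_xP(x)\ket{x}\bra{x}\otimes\rho_x$, and both proofs then apply monotonicity of $V_2$ under the local POVM on $B$, plug $\tau_B=\sigma_B=I_r/r$ into~\eqref{eq:V_2}, and finish with the elementary bound $\tr(\hat C_x^{\,2})\le(\tr\hat C_x)^2$ (equivalently $\tr(\rho_x^2)\le 1$). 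Your closing remarks on tightness and on why the naive application of monotonicity along $\ket{\psi}\to P$ would be too weak are a nice addition not present in the paper.
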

    \begin{proof}
    Using the same notations $C_x'$, $D_y'$ and $\rho_{x}$ that we just defined {in Proposition \ref{condition:second}}, we construct a quantum state $\rho_{AB}$ shared by Alice and Bob, which is expressed as
    $$\rho_{AB}= \sum_{x} P(x) \ket{x}\bra{x}\otimes \rho_{x}.$$
    It can be seen that $P$ can be generated directly by measuring this state by the POVM $\{\ket{x}\bra{x}\otimes D_y\}_{x,y}$, which implies that $V_2(A ; B) \geq V'_2(A ; B)$. Let $\tau_B= \sigma_B= I/r$ {in Eq.(\ref{eq:V_2})}, then $V_2(A ; B)^2$ can be upper bounded by
    \begin{equation}\label{eq:upper_V_2}
        \begin{aligned}
        V_2(A ; B)^2 & \leq \tr\left(\sum_x P(x)\ket{x}\bra{x}\otimes r \rho_x^2\right) - r\tr(\rho_B^2)\\
        & \leq r\left(\tr\left(\sum_x P(x) \rho_x^2\right)-\sum_i \lambda_i^2\right)\\
        & \leq r\left(\tr\left(\sum_x P(x) \rho_x\right)-\sum_i \lambda_i^2\right)\\
        & = r(1-\sum_i \lambda_i^2).
        \end{aligned}
        \end{equation}
    In the second inequality we have used the fact that $\rho_A = \sum_x P(x) \ket{x}\bra{x}$ and $\rho_B =\sum_xC_x=\Lambda$. According to the relation $V_2(A ; B) \geq V'_2(A ; B)$, we eventually prove the conclusion.
    \end{proof}
    Again, we now see an example showing that this condition can be useful.
    \begin{example}
     Suppose when the pure state $\ket{\psi}=\frac{1}{\sqrt{10}}\ket{00}+\frac{3}{\sqrt{10}}\ket{11}$ is given, Alice and Bob aim at generating the correlation     $P=\frac{1}{11}\begin{bmatrix}
        2 & 6\\
        3 & 0
        \end{bmatrix}$ under local operations. The previous necessary condition given in Eq.(\ref{eq:Holevo_condition}) is satisfied in this case, nevertheless this task cannot be fulfilled since it violates Eq.(\ref{eq:V2_condition}): $\sum_i \lambda_i^2 = 0.82 > 1 - V'_2(A ; B)^2/2 \approx 0.7769$.
        \end{example}

    \begin{proposition}(The second necessary condition for the sum of squares of $\{\lambda_i\}$)\label{condition:fourth}
        If $\ket{\psi}$ can produce $P=[P(x,y)]_{x,y}$ by local operations only, then it holds that  \begin{equation}\label{eq:fourth_condition}
        \sum_{i=1}^n\sum_{j=1}^nF(P_i,P_j)^2\geq\sum_i\lambda_i^2.
        \end{equation}
        \end{proposition}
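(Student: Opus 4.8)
The plan is to use the diagonal form of PSD factorizations guaranteed by Theorem~\ref{thm:transform condition}, together with the fact that $F(P_i,P_j)$ measures the overlap of the $i$-th and $j$-th rows of $P$ viewed as probability vectors (up to normalization), and to relate these overlaps to the trace inner products $\tr(C_iC_j)$ of the PSD matrices. Concretely, suppose $\ket{\psi}$ generates $P$. By Theorem~\ref{thm:transform condition} there exist PSD matrices $C_x,D_y$ with $P(x,y)=\tr(C_xD_y)$, $\sum_x C_x=\sum_y D_y=\Lambda=\mathrm{diag}(\sqrt{\lambda_1},\dots,\sqrt{\lambda_r})$. As in the proof of Proposition~\ref{condition:second}, conjugating by $V=\mathrm{diag}(\lambda_1^{1/4},\dots,\lambda_r^{1/4})$ gives $C_x'=VC_xV$, $D_y'=V^{-1}D_yV^{-1}$ with the same products $P(x,y)=\tr(C_x'D_y')$, and now $\sum_x C_x'=\mathrm{diag}(\lambda_1,\dots,\lambda_r)$, $\sum_y D_y'=I$. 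The key quantity is $\sum_i\lambda_i^2=\tr\bigl((\sum_x C_x')^2\bigr)=\sum_{i,j}\tr(C_i'C_j')$, which reduces the target inequality to showing $F(P_i,P_j)^2\ge \tr(C_i'C_j')$ for every pair $i,j$.

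For the per-pair bound, first I would note that, as observed in Proposition~\ref{condition:second}, with $\rho_x=C_x'/\tr(C_x')$ a density matrix and $\{D_y'\}$ a POVM, the vector $P_x/\tr(C_x')$ is the outcome distribution of measuring $\rho_x$ with $\{D_y'\}$; so $\mathrm{sum}(P_x)=\tr(C_x')$ and the normalized row $\hat P_x \equiv P_x/\mathrm{sum}(P_x)$ equals $\bigl(\tr(\rho_x D_y')\bigr)_y$. The classical fidelity $F(P_i,P_j)=\sqrt{\mathrm{sum}(P_i)\,\mathrm{sum}(P_j)}\,F(\hat P_i,\hat P_j)$, and by the monotonicity of the quantum fidelity $\mathrm{F}$ under the measurement channel $\{D_y'\}$ we get $F(\hat P_i,\hat P_j)\ge \mathrm{F}(\rho_i,\rho_j)$. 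Then I want to lower-bound $\mathrm{F}(\rho_i,\rho_j)$ (equivalently $\tr\sqrt{\rho_i^{1/2}\rho_j\rho_i^{1/2}}$) by a quantity that, after multiplying back by $\mathrm{sum}(P_i)\mathrm{sum}(P_j)=\tr(C_i')\tr(C_j')$, dominates $\tr(C_i'C_j')$. The natural route is the inequality $\mathrm{F}(\rho_i,\rho_j)^2\ge \tr(\rho_i\rho_j)$ (the "superfidelity"-type bound, or just $\tr\sqrt{\rho_i^{1/2}\rho_j\rho_i^{1/2}}\ge\tr(\rho_i^{1/2}\rho_j\rho_i^{1/2})^{?}$... more carefully: for density matrices $\mathrm{F}(\rho,\sigma)\ge\tr(\rho\sigma)$ since $\mathrm{F}\ge\mathrm{F}^2\ge\tr(\rho\sigma)$ fails in general, so instead use $\mathrm{F}(\rho,\sigma)\ge\sqrt{\tr(\rho\sigma)}$ is also not standard). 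The clean fact I will use is $F(\hat P_i,\hat P_j)^2 = \bigl(\sum_y\sqrt{\hat P_i(y)\hat P_j(y)}\bigr)^2\ge \sum_y \hat P_i(y)\hat P_j(y) = \langle \hat P_i,\hat P_j\rangle$; multiplying by $\tr(C_i')\tr(C_j')$ gives $F(P_i,P_j)^2\ge \langle P_i,P_j\rangle=\sum_y\tr(C_i'D_y')\tr(C_j'D_y')$, so it suffices to show $\sum_y\tr(C_i'D_y')\tr(C_j'D_y')\ge\tr(C_i'C_j')$, i.e. that the map $y\mapsto$ the joint "distribution" built from $\{D_y'\}$ does not lose overlap — this follows because $\sum_y D_y'\otimes D_y'\ge (\text{something})$...

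The honest statement of the main obstacle: the final per-pair inequality $\sum_y\tr(C_i'D_y')\tr(C_j'D_y')\ge\tr(C_i'C_j')$ is \emph{false} in general (take $\{D_y'\}$ a single POVM element $I$: LHS $=\tr(C_i')\tr(C_j')\ge\tr(C_i'C_j')$, which is fine; but a fine-grained POVM can make LHS small), so the chain above is too lossy and must be replaced. The correct approach is instead to bound $F(P_i,P_j)$ directly against $\mathrm{F}$ of the \emph{unnormalized} operators: one has $F(P_i,P_j)=\sum_y\sqrt{\tr(C_i'D_y')}\sqrt{\tr(C_j'D_y')}\ge\sum_y\tr\bigl(\sqrt{D_y'}\,\sqrt{C_i'}\,U_y\,\sqrt{C_j'}\,\sqrt{D_y'}\bigr)$ optimized over unitaries $U_y$ (Uhlmann), and then the $\sum_y$ with $\sum_y D_y'=I$ can be collapsed. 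So the plan is: (i) invoke Theorem~\ref{thm:transform condition} and the $V$-conjugation to get $\sum_i\lambda_i^2=\sum_{i,j}\tr(C_i'C_j')$; (ii) for each pair $(i,j)$ use Uhlmann's theorem applied to the POVM $\{D_y'\}$ to write $\sqrt{\tr(C_i'D_y')\tr(C_j'D_y')}\ge\bigl|\tr(\sqrt{D_y'}\sqrt{C_i'}\sqrt{C_j'}\sqrt{D_y'})\bigr|$ for a suitable choice, sum over $y$ using $\sum_y D_y'=I$ to obtain $F(P_i,P_j)\ge|\tr(\sqrt{C_i'}\sqrt{C_j'})|$; (iii) conclude with $|\tr(\sqrt{C_i'}\sqrt{C_j'})|^2\ge\tr(C_i'C_j')$? — again this direction is the wrong way, so step (iii) must instead be the Cauchy–Schwarz/operator-monotonicity estimate $\tr(C_i'C_j')\le \tr(\sqrt{C_i'}\sqrt{C_j'})^2$ is \emph{false}; the true inequality is $\tr(C_i'C_j')\le \|C_i'\|_?\cdots$. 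I expect the genuinely delicate step, and the one I would spend the most care on, to be exactly this last comparison between $\tr(C_i'C_j')$ and a fidelity-type overlap of $C_i',C_j'$; the resolution is likely to use that $\sum_i C_i'=\mathrm{diag}(\lambda_i)$ has all eigenvalues $\le\lambda_1\le 1$, so each $C_i'\le I$, whence $\tr(C_i'C_j')\le\tr(\sqrt{C_i'}C_j'\sqrt{C_i'})^{?}$... — in short, the monotone-measure/data-processing structure does the heavy lifting, and the bookkeeping that makes the rows' fidelities dominate $\sum_{i,j}\tr(C_i'C_j')=\sum_i\lambda_i^2$ is where the real work lies.
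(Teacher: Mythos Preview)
Your setup is exactly right and matches the paper: Theorem~\ref{thm:transform condition} plus the $V$-conjugation gives $\sum_x C_x'=\mathrm{diag}(\lambda_1,\dots,\lambda_r)$ and $\sum_y D_y'=I$, so that $\sum_i\lambda_i^2=\sum_{i,j}\tr(C_i'C_j')$, and the task reduces to the per-pair bound $F(P_i,P_j)^2\ge\tr(C_i'C_j')$. You also correctly identify the two-step chain the paper uses: measurement monotonicity of fidelity gives $F(\hat P_i,\hat P_j)\ge \mathrm{F}(\rho_i,\rho_j)$, and then one wants $\mathrm{F}(\rho_i,\rho_j)^2\ge\tr(\rho_i\rho_j)$, which after multiplying by $\tr(C_i')\tr(C_j')$ yields exactly $F(P_i,P_j)^2\ge\tr(C_i'C_j')$.

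The gap is that you talk yourself out of the one inequality that completes the argument. The bound $\mathrm{F}(\rho,\sigma)^2\ge\tr(\rho\sigma)$ \emph{is} true for all density matrices (and is what the paper invokes, citing \cite{lee2017some}). A one-line verification: with $M=\sigma^{1/2}\rho\,\sigma^{1/2}\ge 0$ and eigenvalues $\mu_k\ge 0$,
\[
\mathrm{F}(\rho,\sigma)^2=\bigl(\tr\sqrt{M}\bigr)^2=\Bigl(\sum_k\sqrt{\mu_k}\Bigr)^2\ge\sum_k\mu_k=\tr M=\tr(\rho\sigma).
\]
Your parenthetical ``$\mathrm{F}\ge\mathrm{F}^2\ge\tr(\rho\sigma)$ fails in general'' confuses two different statements; you do not need $\mathrm{F}\ge\mathrm{F}^2$, only $\mathrm{F}^2\ge\tr(\rho\sigma)$, which holds as above. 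Once you accept this, the proof is finished in one line; all of your subsequent detours (the $\langle P_i,P_j\rangle$ route, the Uhlmann-per-outcome route, the $C_i'\le I$ route) are unnecessary and, as you yourself observe, go in the wrong direction.
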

    \begin{proof}
    We continue using the notations $P_x$ and $\rho_x$ defined in {Proposition \ref{condition:second}}. Note that for $x_1,x_2\in[n]$, $P_{x_1}/\tr(\rho_{x_1})$ and $P_{x_2}/\tr(\rho_{x_2})$ can be seen as two probability distributions produced by measuring $\rho_{x_1}$ and $\rho_{x_2}$ with {the} same POVM $\{D_y'\}$. Therefore, we have that
    \begin{align*}
    F(P_{x_1}/\tr(\rho_{x_1}), &P_{x_2}/\tr(\rho_{x_2}))^2\\
    &\geq F(\rho_{x_1},\rho_{x_2})^2\geq\tr(\rho_{x_1}\rho_{x_2}),
    \end{align*}
    where we have used the facts that the fidelity between two quantum states is smaller than that between the two probability distributions produced by measuring them with the same POVM, and that $F(\rho,\sigma)^2\geq\tr(\rho\sigma)$ for any $\rho$ and $\sigma$ {\cite{lee2017some}}. Therefore,
    \[
    F(P_{x_1},P_{x_2})^2\geq\tr(C_{x_1}'C_{x_2}').
    \]
    Combining this with the fact that $\sum_xC_x'=\text{diag}(\lambda_1,\lambda_2,...,\lambda_r)$, we eventually obtain
    \[
    {\sum_{i=1}^n\sum_{j=1}^n}F(P_i,P_j)^2\geq{\sum_{i=1}^n\sum_{j=1}^n}\tr(C_i'C_j')=\sum_i\lambda_i^2.
    \]
    \end{proof}

    As usual, we demonstrate an example showing that this necessary condition is not covered by the previous one given in Eq.(\ref{eq:V2_condition}).

    \begin{example}
    Consider the possibility of applying local operations on $\ket{\psi}=\frac{2}{5}\ket{00}+\frac{\sqrt{21}}{5}\ket{11}$ to generate a correlation $P=\frac{1}{11}\begin{bmatrix}
        2 & 6\\
        3 & 0
    \end{bmatrix}$. By calculating the value of the terms in Eq.(\ref{eq:fourth_condition}), we have that {${\sum_{i=1}^2\sum_{j=1}^2}F(P_i,P_j)^2=\frac{85}{121}\approx 0.7025$} and $\sum_i\lambda_i^2=\frac{457}{625}= 0.7312$, ruling out the possibility of generating $P$ from $\ket{\psi}$. However, $0.7312 \leq 1 - V'_2(A ; B)^2/2 \approx 0.7769$, satisfying the necessary condition in Eq.(\ref{eq:V2_condition}).
    \end{example}

    Lastly, we clarify that among the aforementioned four new necessary conditions based on the {diagonal} form of PSD decomposition, no one is always stronger than the others, which means that these necessary conditions could be useful in different scenarios. To show that this is indeed the case, we now see an example for which the necessary condition given in Eq.\eqref{eq:fourth_condition} is satisfied, but the necessary condition given in Eq.\eqref{eq:first_condition} is violated. 

    \begin{example}
    Specifically, suppose the target correlation $P=\frac{1}{10}\begin{bmatrix}
        4 & 1 & 1\\
        1 & 1 & 0\\
        1 & 0 & 1
    \end{bmatrix}$ with $\operatorname{rank}_{\mathrm{psd}}(P)=2$, then let us consider whether the Bell state can generate $P$ under local operations. In this case, $\sum_i \lambda_i^2 = 0.5$ and ${\sum_{i=1}^3\sum_{j=1}^3}F(P_i,P_j)^2=0.82$, then the necessary condition Eq.(\ref{eq:fourth_condition}) is satisfied. However, by directly computing the upper bound of $\lambda_r$ given by the necessary condition Eq.(\ref{eq:first_condition}), we have that $\min\limits_{x,y}\frac{\sum_{{j}}P(x,{{j}})\sum_{{i}}P({{i}},y)}{P(x,y)} = 0.4 < 0.5$, which implies that the Bell state can not serve as the seed state to generate this classical correlation.
    \end{example}

    \section{The general case that the seed state is mixed}

    \subsection{A general necessary condition for the case of mixed seed}

    We now turn to the general case of the classical correlation generation problem: If Alice and Bob share a mixed seed quantum state $\rho_0$, can they generate $P$ by local operations without communication? According to Theorem \ref{thm:np-hard-quantum}, it is unlikely to efficiently solve this problem.

    In fact, based on the necessary conditions for that a pure seed quantum state can generate $P$, one can also build necessary conditions that $\rho_0$ must satisfy to generate $P$. For this, one can choose a specific purification of $\rho_0$ in $\h_{A_1}\otimes \h_A\otimes \h_B \otimes \h_{B_1}$ \emph{arbitrarily}, denoted $\ket{\psi_0}$, and then determine whether $\ket{\psi_0}$ can generate $P$ by using the results we have obtained. If we find out that $\ket{\psi_0}$ cannot generate $P$ by local operations only, we immediately know that $\rho_0$ cannot generate $P$. In other words, a necessary condition for $\rho$ can generate $P$ is that \emph{any} specific purification $\ket{\psi}$ of $\rho_0$ in $\h_{A_1}\otimes \h_A\otimes \h_B \otimes \h_{B_1}$ can also generate $P$.

    \begin{example}
    Suppose the target correlation is $P=\begin{bmatrix}
        0.3 & 0\\
        0 & 0.7
        \end{bmatrix}$. Alice and Bob share a {two-qubit} mixed state $\rho=\frac{1}{2}\ket{0_A0_B}\bra{0_A0_B}+\frac{1}{2}\ket{1_A1_B}\bra{1_A1_B}$, and want to generate $P$ by local operations. It can be verified that the pure state $\ket{\psi} = \frac{1}{\sqrt{2}}\ket{0_{A_1}0_A0_B0_{B_1}}+\frac{1}{\sqrt{2}}\ket{1_{A_1}1_A1_B1_{B_1}}$ is a specific purification of $\rho$. According to Eq.(\ref{eq:first_condition}), we know that the square of the minimal Schmidt coefficient of the pure seed state must be less than or equal to 0.3, which rules out the possibility of generating $P$ from $\ket{\psi}$, hence the original mixed state $\rho$ {cannot} generate $P$.
    \end{example}

    \subsection{When the seed is also a classical correlation}

    In this subsection, we study the case that both the seed and target are classical correlations, denoted by $P=[P(x,y)]_{x,y}$ and $Q=[Q(x,y)]_{x,y}$, respectively. We will show that this case is still NP-hard.

    \subsubsection{Quantum has no advantage in reachability}

\
Suppose $P=[P(x,y)]_{x,y}$ and $Q=[Q(x,y)]_{x,y}$ are two different classical correlations with the same size. We now prove that quantum operation has no advantage in the task of transforming $P$ to $Q$ locally. {A similar result was also reported in \cite{chitambar2018conditional}.}
    \begin{theorem}\label{thm:noadvantage}
    If $P$ can generate $Q$ by local quantum operations, then the same can be achieved by local classical operations.
    \end{theorem}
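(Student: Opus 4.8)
The plan is to exploit the fact that the seed $P$, being a \emph{classical} correlation, corresponds to the quantum state $\rho_P=\sum_{x,y}P(x,y)\,\ket{x}\bra{x}\otimes\ket{y}\bra{y}$ which is diagonal in the computational basis, so that any local no-communication quantum protocol acting on it is no more powerful than a pair of local classical channels applied to the labels $x$ and $y$. Throughout, Alice's output $u$ indexes the rows of the target $Q$ and Bob's output $v$ its columns.

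First I would argue that a general local quantum protocol reduces to a product POVM on $\rho_P$. Without communication, the most general strategy is: Alice appends local ancillas (possibly carrying private randomness), applies a quantum channel to her register, performs a measurement, and reports the classical outcome $u$; Bob does the analogous thing on his register and reports $v$. Composing the ancilla preparation, the channel, and the measurement into a single effect, Alice's whole operation is described by a POVM $\{\Lambda^A_u\}_u$ acting on her register $A$, and Bob's by a POVM $\{\Lambda^B_v\}_v$ on $B$; hence the joint operation is the product POVM $\{\Lambda^A_u\otimes\Lambda^B_v\}_{u,v}$ measured on $\rho_P$. This is the only place where the no-communication hypothesis is essential --- with communication the product structure breaks and the statement is false.

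Next I would carry out the one-line computation that extracts the classical channels. By direct computation, using that $\rho_P$ is diagonal, the probability of obtaining outcome $(u,v)$ is
\[
\tr\bigl((\Lambda^A_u\otimes\Lambda^B_v)\,\rho_P\bigr)=\sum_{x,y}P(x,y)\,\bra{x}\Lambda^A_u\ket{x}\,\bra{y}\Lambda^B_v\ket{y}.
\]
Put $p_A(u\mid x):=\bra{x}\Lambda^A_u\ket{x}$ and $p_B(v\mid y):=\bra{y}\Lambda^B_v\ket{y}$. Positivity of the POVM elements gives $p_A,p_B\geq 0$, and $\sum_u\Lambda^A_u=I$, $\sum_v\Lambda^B_v=I$ give $\sum_u p_A(u\mid x)=\sum_v p_B(v\mid y)=1$, so $p_A$ and $p_B$ are genuine conditional distributions. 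If the quantum protocol generates $Q$, then $Q(u,v)=\sum_{x,y}P(x,y)\,p_A(u\mid x)\,p_B(v\mid y)$.

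Finally I would exhibit the classical protocol: Alice, holding her share $x$ of the seed, uses local private randomness to sample $u\sim p_A(\cdot\mid x)$, and Bob independently samples $v\sim p_B(\cdot\mid y)$. Since the two samplings are independent and no message is exchanged, the joint law of $(x,u,y,v)$ is $P(x,y)\,p_A(u\mid x)\,p_B(v\mid y)$, whose marginal on $(u,v)$ is exactly $Q$; thus $P$ generates $Q$ by local classical operations. I do not anticipate a real difficulty; the only point that deserves care is the reduction in the first step --- one must be sure that no quantum ancilla or unmeasured subsystem can carry extra correlation, which is immediate here because there is no communication and $\rho_P$ is already diagonal --- together with the observation that local private randomness is part of the classical model, which is what lets the classical parties reproduce the intrinsic randomness of the POVMs.
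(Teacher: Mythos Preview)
Your proof is correct and follows essentially the same approach as the paper: both exploit that the seed state $\rho_P$ is diagonal, so the output distribution factors through the diagonal entries of the local effects, yielding bona fide classical conditional distributions $p_A(u\mid x)$ and $p_B(v\mid y)$. The only cosmetic difference is that the paper phrases Alice's and Bob's operations via Kraus operators $\{E_i\},\{F_j\}$ and sets $p_A(x'\mid x)=\sum_i|\langle x'|E_i|x\rangle|^2$, whereas you absorb ancilla, channel, and measurement into a single POVM $\{\Lambda^A_u\}$ and take $p_A(u\mid x)=\langle x|\Lambda^A_u|x\rangle$; the resulting classical protocol is the same.
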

    \begin{proof}
	Suppose $\{E_i\}$ and $\{F_j\}$ are the Kraus operators for the quantum operations performed by Alice and Bob respectively such that
	\begin{align*}
		\sum_{i,j}(E_i\otimes F_j)\rho(E_i^{\dagger}\otimes F_j^{\dagger})= & \sigma,
	\end{align*}
	where
	\begin{align*}
		\rho= & \sum_{x,y}P(x,y)\ket{x}\bra{x}\otimes\ket{y}\bra{y}, \\
		\sigma= & \sum_{x,y}Q(x,y)\ket{x}\bra{x}\otimes\ket{y}\bra{y}.
	\end{align*}
	We expand the above relation with three pairs of indices and obtain
	\begin{align*}
		\sigma= & \sum_{i,j,x,y,x',y',x'',y''}\!P(x,y)\qip{x'}{E_i|x}\qip{y'}{F_j|y}\qip{x}{E_i^{\dagger}|x''}\\
        &\qquad \qquad \qquad \qquad \quad\cdot \qip{y}{F_j^{\dagger}|y''}\ket{x'}\bra{x''}\otimes\ket{y'}\bra{y''} \\
		= & \sum_{i,j,x,y,x',y'}\!P(x,y)\qip{x'}{E_i|x}\qip{y'}{F_j|y}\qip{x}{E_i^{\dagger}|x'}\\
        &\qquad\qquad\qquad\quad\cdot\qip{y}{F_j^{\dagger}|y'}\ket{x'}\bra{x'}\otimes\ket{y'}\bra{y'} \\
		=&\sum_{i,j,x,y,x',y'}P(x,y)|\qip{x'}{E_i|x}|^2|\qip{y'}{F_j|y}|^2\ket{x'y'}\bra{x'y'}.
	\end{align*}
	This implies that
	\begin{align*}
		Q(x',y')= & \sum_{i,j,x,y}P(x,y)|\qip{x'}{E_i|x}|^2|\qip{y'}{F_j|y}|^2 \\
		=  \sum_{x,y}\!&P(x,y)\!\left(\!\sum_{i}|\qip{x'}{E_i|x}|^2\!\right)\!\left(\!\sum_{j}|\qip{y'}{F_j|y}^2|\!\right)\!.
	\end{align*}
	We define
	\begin{align*}
		\mathrm{P}(x'|x)= & \sum_{i}|\qip{x'}{E_i|x}|^2, \\
		\mathrm{P}(y'|y)= & \sum_{j}|\qip{y'}{F_j|y}|^2.
	\end{align*}
    Then it can be verified that both $\mathrm{P}(x'|x)$ and $\mathrm{P}(y'|y)$ are valid conditional probabilities. Thus if Alice and Bob take them as local classical operations, they can generate $Q$ based on the seed classical correlation $P$.

    \end{proof}

    \subsubsection{The NP-hardness of the case of classical seed}

    We now show that even if the seed state is restricted to a classical correlation, {determining} whether it can generate a target classical correlation is NP-{hard}.
	\begin{theorem}
		\label{thm:np-hard-classical}
		The problem of deciding whether a given correlation $P_1$ can generate another given correlation $P_2$ via local quantum operations is $\mathbf{NP}$-hard.
	\end{theorem}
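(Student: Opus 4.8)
The plan is to recycle the $\mathsf{SUBSET\mhyphen SUM}$ reduction behind Theorem~\ref{thm:np-hard-quantum}, but now to encode the Schmidt coefficients into the seed \emph{correlation} rather than a seed pure state, and to invoke Theorem~\ref{thm:noadvantage} so that it suffices to reason about local \emph{classical} operations. Concretely, given a $\mathsf{SUBSET\mhyphen SUM}$ instance $S=\{a_1,\dots,a_r\}$ with $T=\tfrac12\sum_k a_k$, I would set $\lambda_i=a_i/\sum_k a_k$, take $P_1=\mathrm{diag}(\lambda_1,\dots,\lambda_r)$ to be the $r\times r$ diagonal correlation, and take $P_2=\tfrac12 I_{2\times 2}$. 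Both are nonnegative matrices with entry-sum $1$ and rational entries, computable from the instance in polynomial time. The claim to be proved is that $P_1$ can generate $P_2$ by local quantum operations if and only if some subset of $S$ sums to $T$.

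By Theorem~\ref{thm:noadvantage} (and since local classical operations are a special case of local quantum ones), $P_1$ generates $P_2$ by local quantum operations exactly when it does so by local classical operations, i.e.\ exactly when there are column-stochastic matrices $A\in\mbR_+^{2\times r}$ and $B\in\mbR_+^{2\times r}$ --- the local channels of Alice and Bob, whose output alphabets must be $\{0,1\}$ because $P_2$ is $2\times 2$ --- with $A P_1 B^{T}=P_2$. Written out entrywise using that $P_1$ is diagonal, the off-diagonal constraints are $\sum_i \lambda_i A(0,i)B(1,i)=0$ and $\sum_i \lambda_i A(1,i)B(0,i)=0$, and one diagonal constraint is $\sum_i\lambda_i A(0,i)B(0,i)=\tfrac12$ (the other diagonal entry is then forced by normalization, since all column sums of $A$ and $B$ are $1$). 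The key combinatorial step, which is the classical shadow of the ``projector'' argument in Theorem~\ref{thm:np-hard-quantum}, is this: every $\lambda_i>0$ and all summands are nonnegative, so the two off-diagonal equations force $A(0,i)B(1,i)=A(1,i)B(0,i)=0$ for each $i$; combined with $A(0,i)+A(1,i)=B(0,i)+B(1,i)=1$ this forces $A(0,i),B(0,i)\in\{0,1\}$, so $A$ and $B$ are deterministic. Setting $S_A=\{i:A(0,i)=1\}$ and $S_B=\{i:B(0,i)=1\}$, the off-diagonal equations become $\sum_{i\in S_A\setminus S_B}\lambda_i=\sum_{i\in S_B\setminus S_A}\lambda_i=0$, hence $S_A=S_B=:\Sigma$, and the diagonal equation becomes $\sum_{i\in\Sigma}\lambda_i=\tfrac12$, i.e.\ $\sum_{i\in\Sigma}a_i=T$. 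Conversely, given such a $\Sigma$, taking $A$ and $B$ to be the deterministic indicator maps $i\mapsto\mathbf 1[i\in\Sigma]$ gives $A P_1 B^{T}=\tfrac12 I_{2\times 2}=P_2$, so the reduction is correct in both directions.

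I expect the main obstacle to be twofold: (i) correctly passing to the classical case, which rests entirely on Theorem~\ref{thm:noadvantage} and the observation that the two notions of generation coincide for classical $P_1,P_2$; and (ii) the rigidity argument that every feasible pair $(A,B)$ must be deterministic with equal supports --- this is where all the content sits, and it mirrors the step in Theorem~\ref{thm:np-hard-quantum} forcing the measurement operators to be projectors onto the Schmidt eigenspaces. The remaining points (polynomial-time computability, rationality of the $\lambda_i$, and the normalization bookkeeping) are routine. Hence the construction above is a polynomial-time many-one reduction from $\mathsf{SUBSET\mhyphen SUM}$ to the classical-seed correlation-generation problem, establishing $\mathbf{NP}$-hardness.
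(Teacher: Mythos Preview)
Your proposal is correct and follows essentially the same route as the paper: the identical reduction $P_1=\mathrm{diag}(\lambda_1,\dots,\lambda_r)$, $P_2=\tfrac12 I_{2\times2}$ from $\mathsf{SUBSET\mhyphen SUM}$, the appeal to Theorem~\ref{thm:noadvantage} to pass to column-stochastic $A,B$ with $AP_1B^T=P_2$, and the same rigidity step in which the vanishing off-diagonal entries together with $\lambda_i>0$ and column-stochasticity force $A$ and $B$ to be deterministic with a common support set $\Sigma$ satisfying $\sum_{i\in\Sigma}\lambda_i=\tfrac12$. The only cosmetic difference is that you phrase the rigidity in two passes (first $A(0,i),B(0,i)\in\{0,1\}$, then $S_A=S_B$), whereas the paper does a single case split on whether $A(1,j)>0$; your two-pass version is slightly redundant since the per-coordinate equations $A(0,i)B(1,i)=A(1,i)B(0,i)=0$ already force $A(0,i)=B(0,i)$ in one stroke, but it is entirely sound.
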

	\begin{proof}
        {Similar to the proof of Theorem \ref{thm:np-hard-quantum}, we reduce the $\mathsf{SUBSET\mhyphen SUM}$ problem, which takes as input a set of positive integers $\{a_1, a_2, \ldots, a_r\}$, to this problem.} We consider the case that
		\begin{align*}
			P_1= & \mathrm{diag}(\lambda_1,\cdots,\lambda_r), \\
			P_2= & \frac{1}{2}I_{2\times 2}=\frac{1}{2}
			\begin{pmatrix}
				1 & 0 \\
				0 & 1 \\
			\end{pmatrix},
		\end{align*}
      	where {$\lambda_i = \frac{a_i}{\sum_ka_k}$, hence $\sum_{i=1}^r\lambda_i=1$.} According to Theorem \ref{thm:noadvantage}, we can {assume} that the local operations performed by Alice and Bob are classical. As a result, the correlation $P_2$ can be generated from $P_1$ if and only if there exists $2\times r$ matrices $A,B$ such that
		\begin{align*}
			A(i,j)\geq & 0, &\forall i\in[2], j\in[r], \\
			B(i,j)\geq & 0, &\forall i\in[2], j\in[r], \\
			A(1,j)+A(2,j)=& B(1,j)+B(2,j)=1,  \!\!\!\!\!\!\!&\forall j\in[r], \\
			P_2= & AP_1B^T.
		\end{align*}
		Suppose such $A$ and $B$ indeed exist.
		We now fix some $j\in[r]$ and suppose ${A(1,j)}>0$.
		By ${P_2(1,2)}=0$, we have ${B(2,j)}=0$, hence ${B(1,j)}=1$.
		Since ${P_2(2,1)}=0$, this implies that ${A(2,j)}=0$, hence ${A(1,j)}=1$.
		Due to the symmetry of the problem, a similar conclusion can be drawn for ${A(2,j)}$, ${B(1,j)}$, and ${B(2,j)}$.
		Therefore{,} there exists $S\subseteq [r]$ such that $(j\in S\implies {A(1,j)=B(1,j)}=1)$ and $(j\not\in S\implies {A(2,j)=B(2,j)=1})$.
		This gives ${P_2(1,1)}=\sum_{j\in S}\lambda_j=\frac{1}{2}$, hence {$\sum_{j\in S}a_j$ is} a solution to the corresponding $\mathsf{SUBSET\mhyphen SUM}$ instance. {Conversely, if the $\mathsf{SUBSET\mhyphen SUM}$ instance has a solution $\sum_{j\in S}a_j$, then let $A$ and $B$ be constructed samely as above, we obtain that $P_2=  AP_1B^T$.}

        Therefore, we successfully construct a reduction from the $\mathsf{SUBSET\mhyphen SUM}$ problem to the problem that generates $P_2$ from $P_1$, implying that the latter is NP-{hard}.

	\end{proof}

    \section{An algorithm that computes {diagonal} forms of PSD factorizations}

    According to Theorem \ref{thm:canonical}, to determine whether or not $\ket{\psi}$ can produce $P$ by local operations only, one needs to characterize all possible {diagonal} forms of PSD {factorizations} for $P$, and find out whether one of them is consistent with $\ket{\psi}$.
	In the current section, we propose an algorithm to compute the {diagonal} forms of PSD {factorizations} for an arbitrary correlation $P$ {with respect to a given diagonal matrix $\Lambda$}.

    For this, we now formulate the task of computing {diagonal} {forms} of PSD factorizations as an optimization problem, where the variables are two sets of matrices $\{C_x\}$ and $\{D_y\}$ belonging to the positive semidefinite cone:
    \begin{align}\label{eq:optimization_origin}
    \min _{\substack{C_x, D_y \geq 0 \\ x=1, \ldots, n \\ y=1, \ldots, m}} \sum_{x=1}^n \sum_{y=1}^m\left(P(x,y)-\tr(C_xD_y) \right)^2 \\
    \text{subject to  } \sum_{x=1}^n C_x=\sum_{y=1}^m D_y=\Lambda.\nonumber
    \end{align}
	This optimization problem is non-convex and NP-hard \cite{shitov2017complexity}.
	Since the constraints and objective function are polynomial functions of the variables,
	some established methods such as quantifier elimination \cite{tarski1998decision} and Lasserre's hierarchy \cite{lasserre2001global} could be employed to solve the optimization problem, albeit with exponential running time.
	To obtain a heuristic approach for this problem, we fix one of the two sets of matrices to simplify it to a convex optimization problem,
	similar to the algorithm in \cite{vandaele2018algorithms} for computing PSD factorization.
	We accordingly develop an algorithm that computes {diagonal} forms of PSD {factorizations} by alternately optimizing over $\{C_x\}$ and $\{D_y\}$.
	The pseudo-code is represented in {Algorithm \ref{alg_seesaw}}.
    \begin{algorithm}
        \renewcommand{\algorithmicrequire}{\textbf{Input:}}
        \renewcommand{\algorithmicensure}{\textbf{Output:}}
        \caption{{Alternating optimization for the {diagonal} form of PSD factorizations}}
        \hspace*{0.05in}\textbf{INPUT:} $P \in \mathbb{R}_{+}^{n \times m}$, {initial} $\{D_y\}$,  {and a diagonal matrix $\Lambda$}.\\
        \hspace*{0.05in}\textbf{OUTPUT:}$\{C_x\}\text{ and } \{D_y\}$.
        \label{alg1}
        \begin{algorithmic}[1]
            \WHILE{stopping criterion not satisfied}
                \STATE $\{C_x\} \leftarrow$ sub-algorithm $\left(P,\{D_y\}\right)$
                \STATE $\{D_y\} \leftarrow$ sub-algorithm $\left(P^T,\{C_x\}\right)$
            \ENDWHILE
        \end{algorithmic}
        \label{alg_seesaw}
    \end{algorithm}

    Particularly, the initial input of each matrix $D_{y'} \in \{D_y\}$ is given by
    $$D_{y'}=\sum_{i=1}^r b^i b^{i^T}$$ where $b^i$'s are $n$-dimensional vector whose entries are initialized using the normal distribution $\mathcal{N}(0,1)$, and $r$ is the size of the {diagonal} {form of PSD factorizations}. Each sub-algorithm returns the solution of a convex optimization problem. Taking sub-algorithm $\left(P,\{D_y\}\right)$ for example, the corresponding convex optimization problem can be written as
    $$\min _{\substack{C_x \geq 0 \\ x=1, \ldots, n}} \sum_{x=1}^n \sum_{y=1}^m\left(P(x,y)-\tr(C_xD_y) \right)^2$$
    $$\text{subject to  } \sum_{x=1}^n C_x=\Lambda,$$
    where $\{D_y\}$ are supposed to be fixed. We use CVXPY to solve these subproblems {\cite{diamond2016cvxpy}}, and the pseudo-code is {given in Algorithm \ref{alg_sub}}.
    \begin{algorithm}
        \renewcommand{\algorithmicrequire}{\textbf{Input:}}
        \renewcommand{\algorithmicensure}{\textbf{Output:}}
        \caption{Sub-algorithm}
        \hspace*{0.05in}\textbf {INPUT:} $P \in \mathbb{R}_{+}^{n \times m}$, {initial} $\{D_y\}$, {and a diagonal matrix $\Lambda$}.\\
        \hspace*{0.05in}\textbf {OUTPUT:} $\{C_x\}$.
        \label{alg2}
        \begin{algorithmic}[1]
            \STATE $\{C_x\} \!\leftarrow$\! arg$\min_{\substack{C_x \geq 0 \\ x=1, \ldots, n}} \!\displaystyle\sum_{x=1}^n \sum_{y=1}^m\left(P(x,y)\!-\!\tr(C_xD_y) \right)^2$\\
            \qquad \qquad \qquad \qquad \qquad $\text{subject to  } \displaystyle\sum_{x=1}^n C_x=\Lambda$
        \end{algorithmic}
        \label{alg_sub}
    \end{algorithm}

    We now see a nontrivial application of this algorithm. Suppose we need to determine whether a pure state $\ket{\psi}$ can generate a target classical correlation $P$. For this, we first test whether any one of the aforementioned necessary conditions we have found is violated by this case. If we find such a necessary condition, then we know that $\ket{\psi}$ cannot generate $P$; if no such necessary condition is found, for the time being we suppose it is possible to generate $P$ from $\ket{\psi}$ and then employ {Algorithm \ref{alg_seesaw}} to search for the possible protocols.

    For example, let $\ket{\psi}$ be a 2-qubit pure state with Schmidt coefficient $1/\sqrt{5}$ and $2/\sqrt{5}$, and $P=\frac{1}{3}\begin{bmatrix}
        1 & 1\\
        1 & 0
    \end{bmatrix}$. Note that all the necessary conditions we have listed are satisfied in this case. We set $\Lambda = \text{diag}(1/\sqrt{5},2/\sqrt{5})$ and $r = 2$, then run the algorithm. {The following diagonal form of PSD factorizations can be found in two seconds on a personal computer}:\\
    $$\begin{array}{ll}
        C_1=\left[\begin{array}{ll}
        0.26801401 & 0.22523125\\
        0.22523125 & 0.61132503
    \end{array}\right], \\
    \\
    C_2=\left[\begin{array}{ll}
        0.17919958 & -0.22523125\\
        -0.22523125 & 0.28310216
    \end{array}\right], \\
    \\
    D_1=\left[\begin{array}{ll}
        0.12072403 & -0.26145645\\
        -0.26145645 & 0.68499394
    \end{array}\right], \\
    \\
    D_2=\left[\begin{array}{ll}
        0.32648956 & 0.26145646\\
        0.26145646 & 0.20943325
    \end{array}\right],
    \end{array}$$\\
    where the objective function equals {$9.2\times10^{-10}$}. As a result, we believe that this seed state can be used to generate the target correlation under local operations.

    \subsection*{Acknowledgments}
    Z.W. thanks Jamie Sikora for the helpful discussions. {We would like to express our gratitude to the referees for their helpful comments and suggestions.} Z.C., L.L., X.L., and Z.W. are supported by the National Natural Science Foundation of China, Grant No. 62272259, 62332009, 61832015, the National Key R\&D Program of China, Grant No. 2018YFA0306703, 2021YFE0113100, and Beijing Natural Science Foundation, No. Z220002. P.Y. is supported  by National Natural Science Foundation of China (Grant No. 62332009, 61972191) and Innovation Program for Quantum Science and Technology (Grant No. 2021ZD0302901).

    \bibliographystyle{alpha}
  \bibliography{ref}

\end{document}